\documentclass[compsoc,journal]{IEEEtran}
\usepackage{amsmath,amsfonts,amssymb}
\usepackage{algorithmic}
\usepackage{array}
\usepackage[caption=false,font=normalsize,labelfont=sf,textfont=sf]{subfig}
\usepackage{textcomp}
\usepackage{stfloats}
\usepackage{url}
\usepackage{verbatim}
\usepackage{graphicx}
\def\BibTeX{{\rm B\kern-.05em{\sc i\kern-.025em b}\kern-.08em
    T\kern-.1667em\lower.7ex\hbox{E}\kern-.125emX}}
\usepackage{balance}
\usepackage{cite}
\usepackage{xcolor}
\usepackage{tabularx}
\usepackage{algorithm}
\usepackage{tikz}
\usepackage{bm}
\usepackage[normalem]{ulem}
\newcolumntype{Y}{>{\small\raggedright\arraybackslash}X}

\DeclareMathOperator*{\argmax}{arg\,max}

\usepackage{amsthm}
\usepackage{multirow}
\usepackage{float}
\usepackage{subfig}
\usepackage{svg}
\usepackage{caption}
\usepackage{subcaption}
\usepackage{pifont}
\usepackage{setspace}
\usepackage{mathtools} % erweiterte Fassung von amsmath
\usepackage[nopostdot,acronym,shortcuts,nonumberlist]{glossaries}
\newacronym{sus}{SUS}{status update system}
\newacronym{mcs}{MCS}{mobile crowdsensing}
\newacronym{mu}{MU}{mobile unit}
\newacronym{mcsp}{MCSP}{mobile crowdsensing platform}
\newacronym{dr}{DR}{data requester}
\newacronym{cpu}{CPU}{central processing unit}
\newacronym{psnr}{PSNR}{peak signal-to-noise ratio}
\newacronym{aoi}{AoI}{age Of information}
\newacronym{iot}{IoT}{Internet of Things}
\newacronym{dt}{DT}{digital twin}
\newacronym{mwc}{MWC}{matching with contracts}
\usepackage{inputenc}
\usepackage{comment}
\usepackage{booktabs}
\usepackage[binary-units]{siunitx}

\newtheorem{theorem}{Theorem}
\newtheorem{assumption}{Assumption}
\newtheorem{definition}{Definition}

\begin{document}
\newcommand{\ao}[1]{\textcolor{magenta}{#1}}
\newcommand{\rc}[1]{\textcolor{red}{#1}}
\renewcommand{\algorithmiccomment}[1]{\quad{$\triangleright$\ #1}}

\title{Dynamic Hypergame for Task Assignment in Multi-platform Mobile Crowdsensing Under Incomplete Information}

\author{
\IEEEauthorblockN{Sumedh J. Dongare\IEEEauthorrefmark{1}, Christo Kurisummoottil Thomas\IEEEauthorrefmark{5}, Andrea Ortiz\IEEEauthorrefmark{3}, Walid Saad\IEEEauthorrefmark{4}, Anja Klein\IEEEauthorrefmark{1}}
\\
\IEEEauthorblockA{
\small
\IEEEauthorrefmark{1}Communication Engineering Lab, Technical University of Darmstadt, Germany.\\
\IEEEauthorrefmark{5}Dept. of Electrical and Computer Engineering, Worcester Polytechnic Institute, Worcester, MA, USA.\\
\IEEEauthorrefmark{3}Institute of Telecommunications, Vienna University of Technology, Austria.\\
\IEEEauthorrefmark{4}Wireless@VT, Bradley Department of Electrical and Computer Engineering, Virginia Tech, Alexandria, USA.\\
Emails: \{s.dongare,  a.klein\}@nt.tu-darmstadt.de, cthomas2@wpi.edu, andrea.ortiz@tuwien.ac.at, walids@vt.edu}
\thanks{This work was funded by the BMFTR project Open6GHub+ un-
der grant 16KIS2407, by DAAD with funds from the German Fed-
eral Ministry of Research, Technology and Space BMFTR under grant
57817830, and by the LOEWE Center emergenCITY under grant
LOEWE/1/12/519/03/05.001(0016)/72.
The work of Andrea Ortiz was funded by the Vienna Science and Technology Fund WWTF under grant 10.47379/VRG23002.
The work of Walid Saad was supported by the U.S. National Science Foundation under grant 2201641.
}
}
\maketitle
\newcommand{\MUindex}{k}
\newcommand{\setOfMUs}{\mathcal{K}}
\newcommand{\setOfTaskTypes}{\mathcal{Z}}
\newcommand{\numberOfTaskTypes}{Z}
\newcommand{\taskTypeIndex}{z}
\newcommand{\taskTypeZ}{\taskTypeIndex}
\newcommand{\taskIndex}{n}
\newcommand{\timeindex}{t}
\newcommand{\MCSPindex}{i}

\newcommand{\MUTaskAndTimeIndex}{_{\MUindex,\taskIndex,\timeindex}}
\newcommand{\MUTaskIndex}{_{\MUindex,\taskIndex}}

\newcommand{\sensingTime}{\tau^{\mathrm{sense}}\MUTaskAndTimeIndex}
\newcommand{\expectedSensingTime}{\bar{\tau}^{\mathrm{sense}}_{\MUindex,\taskTypeIndex}}
\newcommand{\communicationTime}{\tau^{\mathrm{comm},\MCSPindex}\MUTaskAndTimeIndex}
\newcommand{\computingTime}{\tau^{\mathrm{comp}}\MUTaskAndTimeIndex}
\newcommand{\expectedCommunicationTime}{\bar{\tau}^{\mathrm{comm},\MCSPindex}_{\MUindex,\taskTypeIndex}}
\newcommand{\totalEnergy}{E\MUTaskAndTimeIndex}
\newcommand{\sensingEnergy}{E^{\mathrm{sense}}\MUTaskAndTimeIndex}
\newcommand{\communicationEnergy}{E^{\mathrm{comm}}\MUTaskAndTimeIndex}
\newcommand{\computingEnergy}{E^{\mathrm{comp}}\MUTaskAndTimeIndex}

\newcommand{\totalTime}{\tau\MUTaskAndTimeIndex}
\newcommand{\expectedUtilityTotalTimeWithIndex}[1]{\bar{\tau}^{\mathrm{MU}}_{#1}}
\newcommand{\expectedTotalTime}{\expectedUtilityMUWithIndex{\MUindex,\taskIndex}}

\newcommand{\txPower}{p_k^\mathrm{comm}}
\newcommand{\sensePower}{p_{k,n}^\mathrm{sense}}
\newcommand{\compPower}{p_k^\mathrm{comp}}

\newcommand{\costFunction}{C^{\mathrm{effort}}_{k,n,t}}

\newcommand{\rewardTaskCompletion}{w_{\taskTypeIndex,\timeindex}}
\newcommand{\rewardPerTaskType}{w_{\taskTypeIndex}}
\newcommand{\MUpreferenceWithIndex}[1]{\succeq^{\mathrm{MU}}_{#1}}
\newcommand{\MUpreference}{\MUpreferenceWithIndex{\MUindex}}
\newcommand{\MUpreferenceTimeDependent}{\MUpreferenceWithIndex{\MUindex, \timeindex}}
\newcommand{\MUstrictPreferenceWithIndex}[1]{\succ^{\mathrm{MU}}_{#1}}
\newcommand{\MCSPTaskpreferenceWithIndex}[1]{\succeq^{\mathrm{MCSP}}_{#1}}
\newcommand{\MCSPTaskpreference}{\MCSPTaskpreferenceWithIndex{\MCSPindex}}

\begin{abstract}
Mobile crowdsensing (MCS) is a promising distributed sensing paradigm for future wireless networks, where MCS platforms (MCSPs) recruit mobile units (MUs) through monetary incentives for sensing data collection. While most existing studies assume a single MCSP, practical deployments involve multiple competing MCSPs that simultaneously propose task offers to MUs, and MUs accept offers that maximize their revenue. This interaction gives rise to a two-sided matching game with contracts (MWC), decomposed into two components: (i) task proposal problem of the MCSPs and (ii) task acceptance problem of the MUs. To optimally solve (i), every MCSP requires information about other platforms' preferences and the qualities of the MUs in advance. Similarly, to solve (ii) optimally, the MUs require information about the task execution efforts of all tasks in advance. Such information is unavailable at the MCSPs and at the MUs. To address the challenge of unknown preferences of the other MCSPs, the MWC is posed as a dynamic hypergame, where every MCSP models the unknown preferences through perceptions and refines them over repeated interactions. To solve the dynamic hypergame under incomplete information, we propose PACMAB, a fully decentralized perception-aware two-sided learning framework where, (i) each MCSP learns an adaptive task proposal strategy under competition, and (ii) each MU learns task acceptance policy by estimating task execution efforts. Computational complexity of PACMAB shows that it scales favorably for the MCSPs as well as the MUs. Extensive simulations show that PACMAB consistently outperforms the benchmarks by completing at least $41\%$ more tasks without assuming complete information.
\end{abstract}

\begin{IEEEkeywords}
Hypergame theory, Matching with contracts, learning-guided matching, Reinforcement learning, Multi-armed bandits.
\end{IEEEkeywords}
\section{Introduction}
\subsection{Overview}
\IEEEPARstart{R}{}ecently, \gls{mcs} emerged as a promising distributed sensing alternative to traditional wireless sensor networks (WSNs) \cite{Ganti2011MCS_intro}.
Compared to WSNs, \glspl{mcs} provides lower infrastructure costs, higher mobility, better coverage, and a wide range of applications due to availability of various sensors on mobile units to perform sensing tasks \cite{Gong_2018_MCS_advantages, Dongare_EHMCS_2022, Dongare_Globecom_2023}.
With the advancements in the \gls{iot} and a rapidly growing number of smart devices, \glspl{mcs} has become a topic of interest in academia and in industrial applications \cite{MCS_intro_Dai_21, iot_mcs_Jian_2015} such as traffic \cite{Jiang_trafficMCS_2023} and environmental monitoring \cite{Dinh_environmental_monitoring_2022}, spectrum sensing \cite{MCS_for_spectrumsensing_Li2018,Spectrum_sensing_MCS_2021}, and mobile health \cite{Pryss2018mHealth}.

An \gls{mcs} architecture consists of \glspl{dr}, \glspl{mcsp}, and  \glspl{mu}.
When a \gls{dr} requires some sensing data from the target region, it creates a sensing request and sends it to an \gls{mcsp}.
The \gls{dr} offers a payment as an incentive to the selected \gls{mcsp}.
The \gls{mcsp} uses then part of this payment to recruit \glspl{mu} for the sensing by sending them task offers.
If an \gls{mu} accepts the offer from an MCSP, it performs the task, collects the sensing result, and transmits it back to the respective \gls{mcsp}.
The \gls{mu} receives the agreed payment according to the original task offer.
Every \gls{mcsp} decides its own task assignment strategy which maximizes its own net revenue.
The \glspl{dr} incentivize the \glspl{mcsp} to obtain sensing data with better quality by offering the \glspl{mcsp} payments proportional to the quality of sensing result.
For each sensing task, different \glspl{mu} may produce sensing results of different quality.
Equivalently, the quality of the sensing result varies for every \gls{mu}-task pair.
Every \gls{mu} decides on  which task offer to accept from which MCSP depending on the expected net revenue from the offers, i.e., the payment offered minus the task execution efforts.

\vspace{-0.5em}
\subsection{Research challenges}
\label{subsec:research_challenges}
The task assignment between the \glspl{mcsp} and the \glspl{mu} determines the success of the \gls{mcs} system.
Ideally, the assignment must maximize the revenues of both the \glspl{mcsp} and the \glspl{mu}, such that neither of the them have any incentive to deviate from the assignment.
To achieve such task assignment, the \gls{mcs} system has to overcome several challenges:\\
1) \textit{Conflicting interests of the \glspl{mcsp} and the \glspl{mu}}: 
The task proposal strategies of the \glspl{mcsp} and the acceptance strategies of the \glspl{mu}' are independent from each other and based only on their own net revenue.
For the \glspl{mcsp}, this means balancing between the revenue they get from the \glspl{dr} and the payment offered to the \glspl{mu}.
Similarly, the \glspl{mu} must balance between payments they receive and the efforts required to perform the tasks.
%Why this is a challenge?
Since the interests of the MCSPs and the MUs are not aligned, finding the optimal proposal and acceptance strategies is challenging.
\\
2) \textit{Competing \glspl{mcsp}}:
Since the number of available \glspl{mu} is finite, the \glspl{mcsp} compete with one another to have their proposals accepted.
Thus, every \glspl{mcsp} aims to make its task offers more attractive by selecting a suitable payment. % from a range of possible payments.
A higher payment increases the likelihood of \gls{mu}s acceptance, however, it reduces the net revenue of the \gls{mcsp}.
\\
3) \textit{Incomplete information}:
To make optimal decisions,  \glspl{mcsp} and \glspl{mu} require complete information about the \gls{mcs} system.
For the \glspl{mcsp}, this means information about (i) revenues earned from all \gls{mu}-task combinations, (ii) preferences of the other \glspl{mcsp}, and (iii) \glspl{mu}' preferences over the offered tasks.
Similarly, the \glspl{mu} must know the efforts required to perform the offered tasks.
However, in realistic scenarios, such information is unavailable at the \glspl{mcsp} and \glspl{mu}.
Thus, obtaining a task proposal strategy at the \glspl{mcsp} and a task acceptance strategy at the \glspl{mu} under incomplete information is crucial for the success of \gls{mcs}.

\vspace{-0.5em}
\subsection{Related works}
\label{subsec:related_works}
Depending on the information available to the decision-making entities, existing task assignment approaches in \gls{mcs} can be broadly classified into optimization-\cite{OPAT_Huang_2022, OEBS_2023_chang_LP_GA_optimization, RATE_optimization_2024_zhao, Xie_2024_optimization_comput_crowdsensing}, game-theory-\cite{Wang_2019_auction_game, Yucel_2022_game_STA, Bernd_ICC_2022}, and learning-based solutions\cite{Bernd_WSaad_OSL_2024, Dongare_EHMCS_2022, Dongare_Globecom_2023, C1:own:Dongare2024b}.
These approaches differ significantly in scalability, realism, and robustness.

Optimization-based solutions formulate task assignment as centralized profit, energy, or social-welfare maximization problems under spatial, temporal, and budget constraints \cite{OPAT_Huang_2022, OEBS_2023_chang_LP_GA_optimization, RATE_optimization_2024_zhao, Xie_2024_optimization_comput_crowdsensing}.
Such methods achieve near-optimal performance, but rely on complete non-causal information about the task characteristics and user capabilities.
Assuming availability of such information at the \gls{mu} or at the \gls{mcsp} is unrealistic.
Moreover, these works suffer from high computational complexity, which limits their applicability in large-scale and dynamic \gls{mcs} systems.

To improve scalability, game-theoretic approaches exploit decentralized decision-making by explicitly modeling strategic interactions between \glspl{mcsp} and \glspl{mu}, e.g., by modeling the problem as \gls{mwc} \cite{Wang_2019_auction_game, Yucel_2022_game_STA, Bernd_ICC_2022}.
However, these works assume availability of complete information regarding preferences of the \glspl{mcsp} and the \glspl{mu} which usually depends on task execution efforts, and data qualities of the \glspl{mu}.
Moreover, task characteristics and agent preferences are usually assumed to be static, which does not reflect the stochastic and time-varying nature of real-world \gls{mcs} systems.

Learning-based approaches relax these assumptions by allowing \glspl{mcsp} and \glspl{mu} to learn task utilities and preferences over time \cite{Bernd_WSaad_OSL_2024, Dongare_EHMCS_2022, Dongare_Globecom_2023, C1:own:Dongare2024b}.
Despite the improved realism, these works consider a single \gls{mcsp} handling multiple \glspl{mu}, thereby neglecting inter-platform competition and strategic interactions among multiple \glspl{mcsp}.
A more realistic multi-platform \glspl{mcs} has been recently investigated in \cite{Li_multiplatform_coop_comp_2021, Multiplatform_coop_2025_Peng, Multiplatform_coop_2025_Liu, Multiplatform_Yang_price_match_hybrid_2025, DNiyato_2025_MultiMCSP_stackelberg_MADDPG}.
However, these studies assume either an explicit cooperation among \glspl{mcsp} or the existence of a trusted and impartial cross-platform task management entity.
Such assumptions are difficult to fulfill in competitive real-world deployments, where \glspl{mcsp} act selfishly.

In an MCS system with multiple independent decision making agents, finding an optimal strategy for any agent would require information about the preferences of the other agents.
However, the availability of such information is unrealistic to assume in practical applications.
In such cases, hypergame theory \cite{BENNETT1977749, BENNETT1980489, BENNETT1980293Bidders_and_dispensers} provides a framework to model strategic interactions between decision-making agents under incomplete information.
Using this framework, agents maintain and update perceptions about the strategies or preferences of the other agents and use them to device a strategy under incomplete information.
A hypergame is dynamic when the perceptions change over time~\cite{dynamic_HG_2023_cybersecurity_deceptions}.
Hypergames have been applied in cyber-security and semantic communications \cite{dynamic_HG_2023_cybersecurity_deceptions, hypergames_equilibrium_analysis_2018_Aljefri_cybernetics, HG_attacker_defender_2022_Wan, Christo_2024_hypergames_semantic}, but their use in  \glspl{mcs} to obtain task proposal and task assignment strategies remains largely unexplored.

\vspace{-0.5em}
\subsection{Contributions}
\label{subsec:contributions}
%The main contributions of this work are:
The main contribution of this paper is a novel fully decentralized task proposal strategy for the MCSPs and task acceptance strategy for the MUs in the considered \gls{mcs} system.
To determine these strategies, every MCSP and MU requires knowledge about the other decision making agents.
Under realistic settings, we assume that such information is unavailable to both entities and formulate the problem as a dynamic hypergame.
Within the framework of the hypergame, the unknown preferences of the agents are modeled as perceptions and are updated over time.
To account for the incomplete information about the MCS system, we utilize reinforcement learning and propose a learning-aided hypergame solution termed as PACMAB.
The goal of PACMAB is to find strategies for the MCSPs as well as the MUs which maximize their individual revenues.
Towards this goal we make the following key contributions:
\begin{itemize}
    \item 
    To solve the dynamic hypergame, we develop a perception-aware matching solution which accounts for the evolving perceptions of the \glspl{mcsp} to obtain task proposal strategies that maximize their net revenue.
    In this approach, all \glspl{mcsp} know their own preferences over \glspl{mu} and tasks, while the \glspl{mu} know their task efforts in advance.
    Under these idealized assumptions, we can determine a performance upper bound which can be used to compare the performance of the proposed PACMAB algorithm.
    \item We prove that the perception-aware matching solution converges to a stable hyper Nash equilibrium indicating that every \gls{mcsp} achieves a stable assignment in its own subjective game under the presence of perceptions.
    We show that as the perception error reduces, the \glspl{mcsp}' net revenue increases.
    \item 
    To handle the fully unknown preferences at both \glspl{mcsp} and \glspl{mu}, we propose \textit{PACMAB}, a novel, fully decentralized, perception-aware two-sided learning solution.
    PACMAB consists of two components: (a) at every \gls{mcsp}, we implement an upper confidence bound (UCB) multi-armed bandit algorithm to determine an adaptive task proposal strategy without strict requirement of the knowledge of own and other \glspl{mcsp}' preferences, and (b) at every \gls{mu}, we implement a low complexity multi-armed bandit algorithm to learn the task execution efforts required to find an efficient task acceptance strategy.
    PACMAB enables each \gls{mcsp} to maintain and update perceptions about the preferences of other \glspl{mcsp} to obtain its own task proposal strategy.
    \item We analyze the computational complexity of PACMAB from the perspectives of the \glspl{mcsp} and the \glspl{mu} separately. Specifically, we show that from the perspective of the \glspl{mcsp}, the complexity grows only linearly with the number of \glspl{mu}, tasks, and the payment levels used by the \gls{mcsp}.
    From the perspective of the \glspl{mu}, the complexity grows only linearly with the number of \glspl{mcsp}.
    \item 
    We conduct extensive numerical evaluations to compare PACMAB's performance against the benchmark algorithms.
    The results demonstrate superior performance of PACMAB in terms of social welfare, task completion ratio, and sensing quality without the strict requirement of complete information.
\end{itemize}

The rest of the paper is organized as follows.
In Section \ref{sec:system_model}, the considered \gls{mcs} system model is introduced.
Section \ref{sec:problem_formulation} presents the problem formulation as an \gls{mwc} and the reformulation as a dynamic hypergame.
Section \ref{sec:solution_algorithms} provides a solution to this hypergame under information assumptions.
Our proposed PACMAB algorithm is described in Section \ref{sec:learning_aided_hypergame_solution}.
The numerical evaluations are given in Section \ref{sec:numerical_evaluation}.
Finally, conclusions are drawn in Section~\ref{sec:conclusion}.
\section{System model}
\label{sec:system_model}
\begin{figure}
    \centering
    \includegraphics[width=\linewidth]{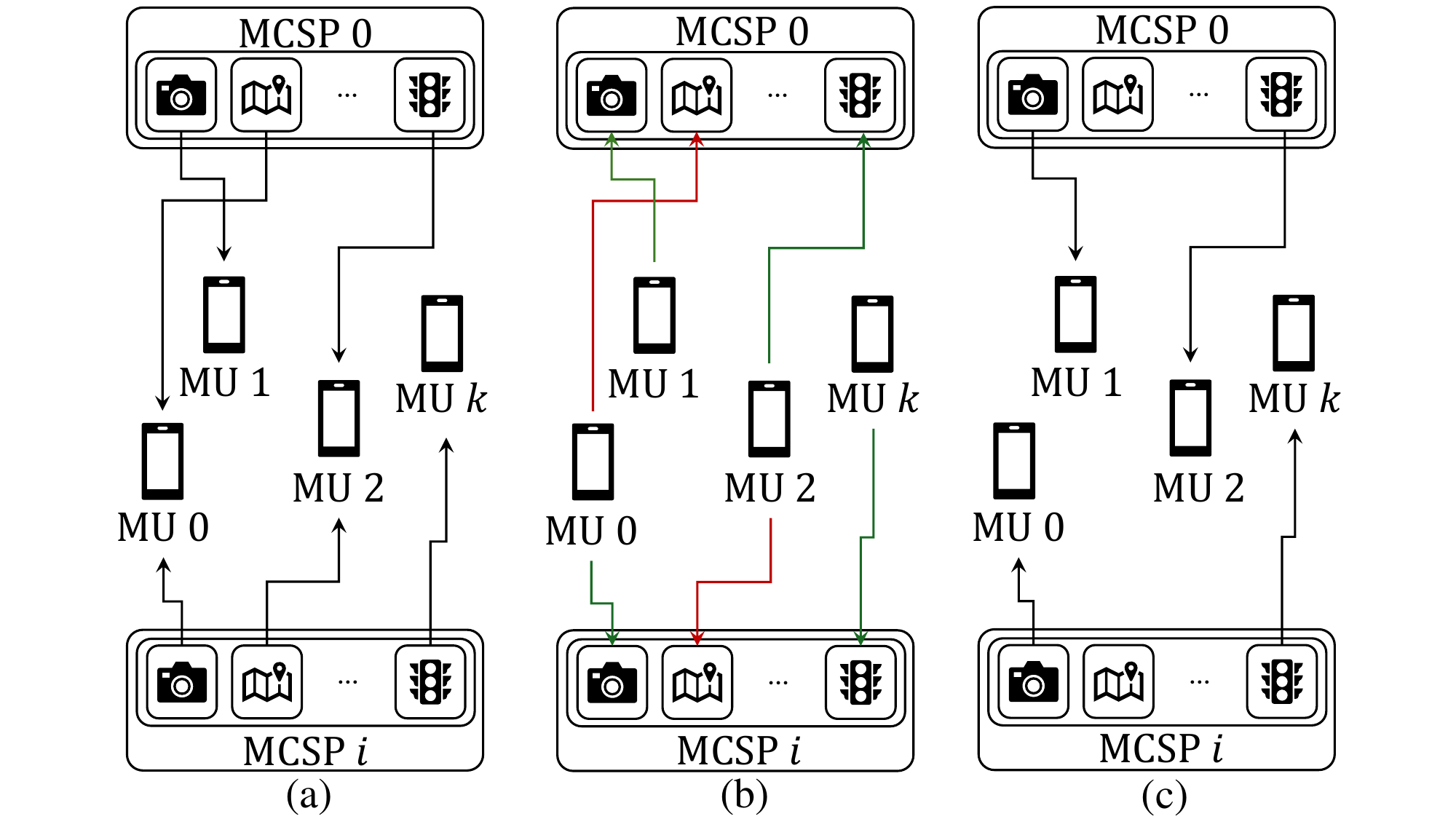}
    \caption{Overview of the system model: (a) \glspl{mcsp} send task offers to \glspl{mu}. (b) \glspl{mu} respond with either an accept or a reject. (c) Assigned \glspl{mu} perform the task, transmit the result back to the respective \gls{mcsp} and receive the payment.}
    \label{fig:system_model}
\end{figure}
\subsection{Overview}
Figure \ref{fig:system_model} illustrates our \gls{mcs} system model.
We consider a set $\mathcal{I}=\{i\}_{i=1}^I$ of $I$ \glspl{mcsp}.
Similarly we assume a set $\mathcal{K}=\{k\}_{k=1}^K$ of $K$ \glspl{mu}.
The time is divided into $T$ discrete time steps and each time step is given by $t\in\{0, 1, \ldots, T-1\}$.
Table~\ref{tab:notations} provides the summary of notations used in this work.

Every \gls{mcsp} $i$ offers $N_i$ different tasks in every time step $t$.
These tasks are collected in the set $\mathcal{O}^i_t$ and are indexed by $n$.
We consider that different types of tasks are present in $\mathcal{O}^i_t$, e.g., temperature sensing, noise level monitoring, or taking a picture or a video.
Different task types are collected in a set $\mathcal{Z}=\{z\}_{z=1}^Z$ of $Z$ task types.
Each available task is given by $O^i_{n,t}$ and is mapped into a task type $z\in\mathcal{Z}$ using a mapping function $g:\mathcal{A}^i_t \rightarrow \mathcal{Z}$ such that $g(O^i_{n,t})=z$.
In each time step, tasks of the same type $z$ are collected in a set $\mathcal{O}^i_{z,t}\subseteq\mathcal{O}^i_t$.
Each task of type $z$ is characterized by the average data size $d_z$ of the raw sensing data which is measured in bits, the task processing complexity $c_z$ measured in CPU cycles/bit, and the average size $s_z$ of the processed sensing result, measured in bits, and  which the \glspl{mu} transmit back to the \gls{mcsp}.
All the tasks in $\mathcal{O}^i_{z,t}$ have identical characteristics.
We assume that only one $\mathrm{MU}$ is required to successfully complete a task $O^i_{n,t}$.
If an \gls{mcsp} requires more sensing results for the same task type $z$, then it can generate more tasks of the same type and propose them to different \glspl{mu}.
The number $\rho^i_z$ of tasks available in one time step, which belong to the same task type $z$, is termed the quota of task type $z$, with $|\mathcal{O}^i_{z,t}|=\rho^i_z$.

\vspace{-3mm}
\subsection{Mobile crowdsensing platforms}
\label{subsec:mcsps}
At the beginning of each time step $t$, every \gls{mcsp} $i$ offers a task $O^i_{n,t}$ of type $z$ to \gls{mu} $k$ along with a payment offer $P^i_{k,n,t} \in \mathcal{P}^i_z$.
$\mathcal{P}^i_z$ is a discrete set containing the possible payments that each \gls{mcsp} can offer for each task type $z$.
The offer is denoted by $\Hat{O}^i_{k,n,t}=\langle O^i_{n,t}, P^i_{k,n,t} \rangle$ and aims at maximizing the net revenue of the \gls{mcsp}.
The \gls{mcsp} transmits this offer to the \gls{mu} and waits for its acceptance or rejection decision.
Once accepted, the \gls{mu} $k$ completes the task $O^i_{n,t}$ and transmits the sensing result $r_{k,n,t}$ back to the \gls{mcsp} $i$ over a wireless channel.
Note that an \gls{mcsp} can offer only one task to an \gls{mu} in time step $t$, as the \glspl{mu} can execute only one task per time step.
If \gls{mu} $k$ accepts task offer $\Hat{O}^i_{k,n,t}$ from \gls{mcsp} $i$, this assignment is denoted by $y^i_{k,n,t}\in \{0,1\}$.
All the assignment decisions for \gls{mcsp} $i$ are stored in the matrix $\textbf{Y}^i_t \in \{0,1\}^{K\times N_i}$.

\gls{mcsp} $i$ and the interested \gls{dr} make a contractual agreement in which the \gls{dr} pays at least $w^i_z$ monetary units to the \gls{mcsp} for every completed task of type $z$.
In addition to the basic payment $w^i_z$, the \gls{dr} is willing to pay more proportional to the quality of the sensing task result.
When \gls{mu} $k$ completes a task, it sends the sensing result $r_{k,n,t}$ back to the \gls{mcsp}.
\gls{mcsp} $i$ calculates the quality factor $q^i_{k,n,t}\in[0,1]$ of the sensing result $r_{k,n,t}$  using a quality function $Q^i_z(\cdot)$ given by
\begin{equation}
    q^i_{k,n,t} = Q^i_z(r_{k,n,t}).
\end{equation}
For a fixed \gls{mu} $k$ and task $O^i_{n,t}$ of type $z$, the quality $q^i_{k,n,t}$ is an unknown random variable which can be observed only after completing the task.
Each \gls{mcsp} evaluates the quality independently and based on the specific aspect of the sensing result it wants to focus on. 
Some examples of the quality functions are the Peak Signal-to-Noise Ratio (PSNR) of an image, or the accuracy and resolution of a temperature measurement.
As $Q^i_z(\cdot)$ is different for each \gls{mcsp}, the same sensing result $r_{k,n,t}$ can have different quality factors depending on to which \gls{mcsp} it is submitted.
The calculation of $w^i_{k,z}$ given by
\begin{equation}
    w^i_{k,z} = (1 + q^i_{k,n,t})w^i_z.
\end{equation}
We emphasize that the proposed system model is flexible and allows alternative functions to be easily integrated for evaluating $w^i_{k,z}$.
As the quality $q^i_{k,n,t}$ of \gls{mu} $k$ is unknown, \gls{mcsp} $i$ does not know $w^i_{k,z}$ in advance.
The utility $U^{\mathrm{MCSP},i}_{k,n,t}$ of the \gls{mcsp} when the task $O^i_{n,t}$ of type $z$ is successfully performed by \gls{mu} $k$ is given by
\begin{equation}
\label{eq:MCSP_utility_per_offer}
    U^{\mathrm{MCSP},i}_{k,n,t} = w^i_{k,z} - P^i_{k,n,t}.
\end{equation}
Thus, the \gls{mcsp} has to balance the quality of the \glspl{mu} with the payments they will receive.
The total utility achieved by \gls{mcsp} $i$ in time step $t$ is given by
\begin{equation}
\label{eq:total_MCSP_utility}
    U^{\mathrm{MCSP}, i}_{t} = \sum_{k=1}^K \sum_{n=1}^{N_i} y^i_{k,n,t}U^{\mathrm{MCSP}, i}_{k,n,t}.
\end{equation}
Since $w^i_{k,z}$ is not known, the \gls{mcsp} estimates its utility while assigning the tasks.
The estimated utility $\bar{U}^{\mathrm{MCSP},i}_{k,n,t}$ for a task of type $z$ assigned to \gls{mu} $k$ is given by
\begin{equation}
\label{eq:estimated_MCSP_utility_per_offer}
    \bar{U}^{\mathrm{MCSP},i}_{k,n,t} = \mathbb{E}\{{U}^{\mathrm{MCSP},i}_{k,n,t}\} = \mathbb{E}\{w^i_{k,z}\} - P^i_{k,n,t}.
\end{equation}
The total estimated utility $\Bar{U}^{\mathrm{MCSP}, i}_{t}$ in $t$ is given by
\begin{equation}
    \Bar{U}^{\mathrm{MCSP}, i}_{t} = \sum_{k=1}^K \sum_{n=1}^{N_i} \Bar{U}^{\mathrm{MCSP}, i}_{k,n,t}.
\end{equation}
% Preamble:
% In the text (right before the table, or in notation section):
% (Define a short phrase used in the table)
% For entries depending on MU k and task O^i_{n,t}, we use the suffix “(k,O^i_{n,t})”.

\begin{table*}[t]
\vspace{-3mm}
\centering
\scriptsize
\setlength{\tabcolsep}{4pt}
\renewcommand{\arraystretch}{1.1}
\caption{Table of notations}
\label{tab:notations}

\begin{tabularx}{\textwidth}{@{} X l X l @{}}
\toprule
\textbf{Description} & \textbf{Notation} & \textbf{Description} & \textbf{Notation} \\
\midrule

Set of \glspl{mcsp}, total available \glspl{mcsp}, \gls{mcsp} index
& $\mathcal{I}$, $I$, $i$
& Quality of \gls{mu} $k$ for task $O^i_{n,t}$
& $q^i_{k,n,t}$ \\

Set of \glspl{mu}, total available \glspl{mu}, \gls{mu} index
& $\mathcal{K}$, $K$, $k$
& Utility of \gls{mcsp} $i$ if MU $k$ completes task $O^i_{n,t}$
& $U^{\mathrm{MCSP},i}_{k,n,t}$ \\

Time horizon, time step index
& $T$, $t$
& Utility of \gls{mu} $k$ after performing task $O^i_{n,t}$
& $U^\mathrm{MU}_{k,n,t}$ \\

Set of task types, total available task types, task type index
& $\mathcal{Z}$, $Z$, $z$
& Task completion time of \gls{mu} $k$ for task $O^i_{n,t}$
& $\tau_{k,n,t}$ \\

Set of payments offered by \gls{mcsp} $i$ per task type $z$, payment index, offered payment to \gls{mu} $k$
& $\mathcal{P}^i_z$, $p$, $P^i_{k,n,t}$
& Task quotas of \gls{mcsp} $i$ for task type $z$
& $\rho^i_z$\\

Set of tasks from \gls{mcsp} $i$ at time step $t$
& $\mathcal{O}^i_t$
& Task completion energy of \gls{mu} $k$ for task $O^i_{n,t}$
& $E_{k,n,t}$ \\

Available task of \gls{mcsp} $i$
& $O^i_{n,t}$
& Sensing time/energy of \gls{mu} $k$ for task $O^i_{n,t}$
& $\tau^\mathrm{sense}_{k,n,t}$, $E^\mathrm{sense}_{k,n,t}$ \\

Sensing data size, complexity, task size of type $z$
& $s_z$, $c_z$, $d_z$
& Computing time/energy of \gls{mu} $k$ for task $O^i_{n,t}$
& $\tau^\mathrm{comp}_{k,n,t}$, $E^\mathrm{comp}_{k,n,t}$ \\

Minimum earning of \gls{mcsp} $i$ from completion of a task $O^i_{n,t}$ of type $z$
& $w^i_z$
& Communication time/energy of \gls{mu} $k$ for task $O^i_{n,t}$
& $\tau^\mathrm{comm}_{k,n,t}$, $E^\mathrm{comm}_{k,n,t}$ \\

Actual earning of \gls{mcsp} $i$ from completion of a task $O^i_{n,t}$ of type $z$ by \gls{mu} $k$
& $w^i_{k,z}$
& \gls{mcsp} $i$’s offer to \gls{mu} $k$ for task $O^i_{n,t}$
& $\hat{O}^i_{k,n,t}=\langle O^i_{n,t}, P^i_{k,n,t}\rangle$ \\

Offered payment from \gls{mcsp} $i$ to \gls{mu} $k$
& $P^i_{k,n,t}$
& Task assignments of \gls{mcsp} $i$
& $y^i_{t}$ \\

Preferences of \gls{mcsp} $i$ over \gls{mu} $k$ and task type $z$
& $S^i_{k,z}$
& Preferences of the other \glspl{mcsp} over \gls{mu} $k$ and task type $z$
& $S^{-i}_{k,z}$ \\

\bottomrule
\end{tabularx}
\end{table*}

\vspace{-4mm}
\subsection{Mobile Units}
\label{subsec:mobileUnits}
In every time step $t$, the \glspl{mu} receive task offers from the \glspl{mcsp}.
Without loss of generality, we assume that all the available \glspl{mu} are capable of performing tasks of all types.
However, the quality of the sensing result may vary depending on which \gls{mu} performs the task.
Every \gls{mu} $k$ may receive multiple task offers from different \glspl{mcsp}.
However, in one time step $t$, \gls{mu} $k$ can perform only one task and thus, it has to decide which task offer to accept and which to reject.
\gls{mu} $k$ makes this decision depending on the efforts required to complete the offered task.
To successfully complete a task $O^i_{n,t}$, the \gls{mu} has to spend time and energy.
Specifically, an \gls{mu} $k$ requires time $\sensingTime$, measured in seconds, to sense and generate sensing data $d_z$, measured in bits.
$\sensingTime$ is drawn from a stationary random distribution with probability distribution function (PDF) $f^z_{\sensingTime}(\sensingTime)$ with expected value $\expectedSensingTime=\mathbb{E}\{\sensingTime\}$.
This expected value depends on the task type $z$ and the capabilities of \gls{mu} $k$.
After generating the raw sensing data $d_z$, \glspl{mu} $k$  has to process it such that the result can be transmitted to the \gls{mcsp} over a wireless channel \cite{C1:own:Dongare2024b, simon2023decentralized}.
The computing time $\computingTime$ for processing the sensing data $d_z$ is given by
\begin{equation}
    \computingTime = \frac{c_z d_z}{f^\mathrm{local}_k},
\end{equation}
where $f_k^\mathrm{local}$ is the CPU frequency of \gls{mu} $k$, measured in Hz.
After processing, the sensing result $r_{k,n,t}$ has size $s_z < d_z$.
This sensing result is then transmitted to \gls{mcsp} $i$.
The transmission time $\communicationTime$ required for this is drawn from a stationary random distribution with PDF $f^z_{\communicationTime}(\communicationTime)$.
The expected value of this distribution is denoted by $\expectedCommunicationTime$ and depends on $s_z$ and the quality of the communication channel between \gls{mu} $k$ and \gls{mcsp} $i$.
We assume that the transmission happens via orthogonal frequency division multiple access (OFDMA), meaning, each \gls{mu} is assigned a communication bandwidth which is orthogonal to that of the other \glspl{mu}.
The total time required for successfully completing the task will be $\totalTime=\sensingTime+\computingTime+\communicationTime$.

Similar to time efforts, the \glspl{mu} also spend energy.
The total energy effort invested by \gls{mu} $k$ to perform task $O^i_{n,t}$ is denoted by $\totalEnergy$ and is given by
\begin{align}
    \totalEnergy = &\sensingEnergy + \computingEnergy + \communicationEnergy \\
     = & \sensingTime  \sensePower + \computingTime  \compPower + \communicationTime  \txPower.
\end{align}
Here $\sensePower, \compPower, \txPower$ represent the sensing, computing, and communication power required by \gls{mu} $k$.
Considering the time and energy efforts, we define the \gls{mu}-specific cost function $\costFunction$ \cite{simon2023decentralized, C1:own:Dongare2024b} as
\begin{equation}
    \costFunction = \alpha_k \totalTime + \beta_k \totalEnergy.
\end{equation}
Note that the cost of performing any task of type $z$ is identical for a given \gls{mu} $k$ irrespective of the \gls{mcsp} $i$ offering it.
This is because the cost depends only on the \gls{mu} capabilities and the task type.
The cost function balances completion time and consumed energy using the \gls{mu}-specific time-cost parameter $\alpha_k$ measured in monetary units per second and energy-cost parameter $\beta_k$ measured in monetary units per joules.
Each \gls{mu} uses $C^\mathrm{effort}_{k,n,t}$ as the minimum payment required to compensate its efforts.
In reality, the \glspl{mu} prefer payments higher than $C^\mathrm{effort}_{k,n,t}$ in order to make profit.
The \gls{mu} profit is calculated as
\begin{equation}
    U^{\mathrm{MU}}_{k,n,t} = y^i_{k,n,t}(P^i_{k,n,t} - C^\mathrm{effort}_{k,n,t}).
\end{equation}
Since the true task efforts $C^\mathrm{effort}_{k,n,t}$ are not known to \gls{mu} $k$ in advance, it estimates its utility as
\begin{equation}
\begin{split}
    \Bar{U}^\mathrm{MU}_{k,n,t}  & = \mathbb{E}\{U^\mathrm{MU}_{k,n,t}|O^i_{n,t}\in\mathcal{O}^i_{z,t}\}\\
                 & = P^i_{k,n,t} - \mathbb{E}\{C^\mathrm{effort}_{k,n,t}\}.
\end{split}
\end{equation}
The \glspl{mu} accept the task offers that maximize their expected estimated utility $\Bar{U}^\mathrm{MU}_{k,n,t}$. They make this decision independently and selfishly.
We assume that the \glspl{mu} provide a feedback to the \glspl{mcsp} when a task is rejected.
The feedback involves which task they have accepted and at what payment.
The \glspl{mu} share this information  as an incentive to receive better and more attractive offers in the future.
Additionally, it helps the \gls{mcsp} to estimate the preferences of \glspl{mu} as well as the strategy of the other \glspl{mcsp}.

\section{Problem formulation}
\label{sec:problem_formulation}
\subsection{Problem formulation as a matching game with contracts}
To capture the fact that both the \glspl{mcsp} and the \glspl{mu} make independent and selfish decisions based on their own preferences, we formulate the task proposal and task acceptance problem using \gls{mwc} \cite{Milgrom_Hatfield_MWC_2005}.
All the \glspl{mcsp} and the \glspl{mu} are considered to be rational and selfish decision makers which aim to maximize their own utilities.
The main goal of \gls{mwc} is to identify a stable matching solution, i.e., task assignments which neither the \glspl{mcsp} nor the \glspl{mu} can improve by changing the assignments.
The \gls{mwc} is a model designed for two-sided matching markets such as our \gls{mcs} system where the \glspl{mcsp} have certain sensing demands from a target area and the \glspl{mu} offer their sensing resources in exchange for payments.
The sensing demands are defined as sensing tasks which the \glspl{mcsp} offer to the \glspl{mu} along with a certain payment as an incentive.
The matching game $\mathcal{G}$ in time step $t$ is formally defined by the tuple $\mathcal{G}=\{\mathcal{I}, \mathcal{K}, \mathcal{O}^i_t, \MUpreference, \MCSPTaskpreference\}$ where $\MUpreference$ represents the preference ordering of \gls{mu} $k$, and, similarly, $\MCSPTaskpreference$ represents the preference ordering of \gls{mcsp} $i$.
We also define a contract $x^i_{k,z,p} = \{i,k,O^i_{n,t}, P^i_{k,n,t}\}$ for the \gls{mwc} \cite{Milgrom_Hatfield_MWC_2005}.
Each contract $x^i_{k,z,p}$ is bilateral, i.e., it is associated to one \gls{mcsp} and one \gls{mu}.
The finite set $\mathcal{X}$ contains all possible contracts.

The \glspl{mcsp}' preference ordering $\MCSPTaskpreference$ ranks the contracts in $\mathcal{X}$ in decreasing order of the expected utility, i.e.,
\begin{equation}
\begin{split}
    \langle k, z, P^i_{k,n,t}\rangle \MCSPTaskpreference \langle l,z',P^i_{l,n',t}\rangle &\\
    \iff \Bar{U}^{\mathrm{MCSP},i}_{k,n,t} \MCSPTaskpreference \Bar{U}^{\mathrm{MCSP},i}_{l,n',t}.
\end{split}
\end{equation}
In other words, \gls{mcsp} $i$ prefers to offer task $O^i_{n,t}$ of type $z$ to \gls{mu} $k$ for the payment $P^i_{k,n,t}$ more than it prefers to offer task $O^i_{n',t}$ of type $z'$ to \gls{mu} $l$ at the payment $P^i_{l,n',t}$.
This is because the former offer yields higher expected utility.
Note that tasks of same type will also yield different expected utilities depending on which \gls{mu} performs the task and the payment offered.
Similarly, the \glspl{mu}' preference ordering $\MUpreference$ ranks the received task offers from different platforms according to their expected utility, i.e.,
\begin{equation}
    \Hat{O}^i_{k,n,t} \MUpreference \Hat{O}^{j}_{k,n',t} \iff \Bar{U}^\mathrm{MU}_{k,n,t} \MUpreference \Bar{U}^\mathrm{MU}_{k,n',t}.
\end{equation}
The \gls{mu} will always choose the offer that maximizes its expected utility $\Bar{U}^\mathrm{MU}_k$.
The binary variable $y^i_{k,n,t}=1$ if \gls{mu} $k$ accepts task offer $\Hat{O}^i_{k,n,t}$, and $y^i_{k,n,t}=0$ otherwise.
We define a stable task assignment as follows:
\begin{definition}
    A task assignment $\mathbf{Y}_t$ is unstable if there are two \glspl{mu}, \gls{mu} $k$ and \gls{mu} $l$, and two task offers, $\Hat{O}^i_{k,n,t}$ and $\Hat{O}^{j}_{l,n',t}$, from \glspl{mcsp} $i$ and $j$ such that:\\
    a) $y^i_{k,n,t}=1$, i.e., \gls{mu} $k$ has accepted the task offer $\Hat{O}^i_{k,n,t}$.\\
    b) $y^{j}_{l,n',t}=1$, i.e., \gls{mu} $l$ has accepted the task offer $\Hat{O}^{j}_{k,n',t}$.\\
    c) $\langle l, z', P^i_{l,n',t}\rangle \MCSPTaskpreference \langle k,z,P^i_{k,n,t}\rangle$ and $\Hat{O}^{j}_{k,n',t} \MUpreference \Hat{O}^{i}_{k,n,t}$, i.e., \gls{mcsp} $i$ would prefer contract $\langle l, z', P^i_{l,n',t}\rangle$ instead of the assigned contract $\langle k,z,P^i_{k,n,t}\rangle$ and \gls{mu} $k$ would also prefer task offer $\Hat{O}^{j}_{k,n',t}$ over the assigned offer $\Hat{O}^{i}_{k,n,t}$.
\end{definition}
Consequently, a stable task assignment solution maximizes the achieved utilities of the \glspl{mcsp} and the \glspl{mu}, where neither of them can unilaterally change their task assignment strategy to improve their utility.

\subsection{Problem reformulation as a dynamic hypergame}
\label{subsec:hypergame_formulation}
To solve game $\mathcal{G}$, every \gls{mcsp} requires its own preference ordering over all the possible contracts and the task assignment strategy of all other \glspl{mcsp}.
In reality, the \glspl{mcsp} do not know about the strategies of the other \glspl{mcsp}.
Therefore, in order to obtain a task assignment strategy, every \gls{mcsp} maintains its own perception about the other \glspl{mcsp}' preferences.
As a result, the utility achieved by the \gls{mcsp} from task assignment strategy depends on the accuracy of its perceptions.
The lack of knowledge about the other \glspl{mcsp} motivates us to reformulate the \gls{mwc} into a dynamic level-one hypergame in which misperceptions about the preferences of other \glspl{mcsp} exist.
The \glspl{mcsp} observe the outcomes of the repeated matching game to improve their perceptions.
Formally, a dynamic level-one hypergame $\mathcal{H}_t$ \cite{BENNETT1977749, BENNETT1980489, BENNETT1980293Bidders_and_dispensers, Kovach_Gibson_Lamont_HG_2015} is defined as follows.
\begin{definition}
    A \emph{hypergame} $\mathcal{H}_t$ is given by $(\mathcal{I}, (\mathcal{G}^i)_{i\in \mathcal{I}})$, where $\mathcal{I}$ is a set of $I$ \glspl{mcsp} and $\mathcal{G}^i_t=(\mathcal{I}, S^i, U^{\mathrm{MCSP},i}_t)$ is a subjective game of the $i^\text{th}$ \gls{mcsp}, where:\\
    a) $\mathcal{I}$ is a set of \glspl{mcsp} perceived by \gls{mcsp} $i$.\\
    b) $S^i=\times_{j\in \mathcal{I}}S^i_j$ is a set of strategies perceived by \gls{mcsp} $i$, where $S^i_j$ is the set of the strategies of \gls{mcsp} $j$ perceived by \gls{mcsp} $i$.\\
    c) $U^{\mathrm{MCSP},i}_t$ is the utility function of \gls{mcsp} $i$.
\end{definition}
Note that the set $\mathcal{I}$ contains all the \glspl{mcsp} in the \gls{mcs} system including the \gls{mcsp} $i$ itself.
Due to the existence of the \gls{mcsp}-specific perceptions, every \gls{mcsp} virtually plays its subjective game $\mathcal{G}^i_t$ in which it uses its own perceptions and own preferences to device a task proposal strategy.
In hypergame $\mathcal{H}_t$, the strategy $S^i$ of \gls{mcsp} $i$ depends on its perception about the other \glspl{mcsp}, denoted by $-i$.
If the strategy $S^i_j$ that \gls{mcsp} $i$ perceives about \gls{mcsp} $j$, with $j \neq i$, differs from \gls{mcsp} $j$'s actual strategy, then \gls{mcsp} $i$ \textit{misperceives} \gls{mcsp} $j$.
Misperceptions degrade the achieved utilities of the \glspl{mcsp} and the achieved utilities of the \glspl{mu} as they result in suboptimal task assignments.
The definition of rationality in a hypergame remains subjective to the perceived game $\mathcal{G}^i_t$ of \gls{mcsp} $i$.
If \gls{mcsp} $i$ misperceives other \glspl{mcsp}, the obtained task assignment strategy of \gls{mcsp} $i$ may not be rational to other players in their perceived game $\mathcal{G}^{-i}_t$ and also in the base game $\mathcal{G}$.
However, the obtained strategy can be rational for \gls{mcsp} $i$ in its own perceived game $\mathcal{G}^i_t$ if it is the best response to its perceptions.
The repeated nature of the \gls{mwc} allows the \glspl{mcsp} to update their perceptions using the outcomes of the game as feedback.
The selected strategy of \gls{mcsp} $i$ may result in unexpected outcomes or surprises due to the existence of misperceptions.
As a result, the \glspl{mcsp} have an intrinsic motivation to use this feedback to update their perceptions and adjust their task proposal strategy.
Over time, the misperceptions decrease as the estimate of the other \glspl{mcsp}' strategy becomes more accurate.
When the perception error between the perceived preferences of other \glspl{mcsp} and their respective true preferences reaches a constant value, the solution becomes a stable hypergame solution \cite{Sasaki_2008}.
After this point, \gls{mcsp} $i$ has no intrinsic motivation to update its perceptions because there are no matching surprises in the outcome.
The hypergame $\mathcal{H}_t$ is then said to have achieved stability.
In our MCS system, since the base game is an \gls{mwc}, stability means that the decision making entities have found an assignment from which they cannot deviate unilaterally without reducing their individual achieved utilities.
In this context, a stable hypergame outcome indicates that the decision making agents have obtained an assignment under their own perceptions from which they will not deviate.
\vspace{-4mm}
\section{Perception-aware Matching Solution}
\subsection{Overview}
\label{sec:solution_algorithms}
\begin{figure}[t]
    \centering
    \includegraphics[width=\linewidth]{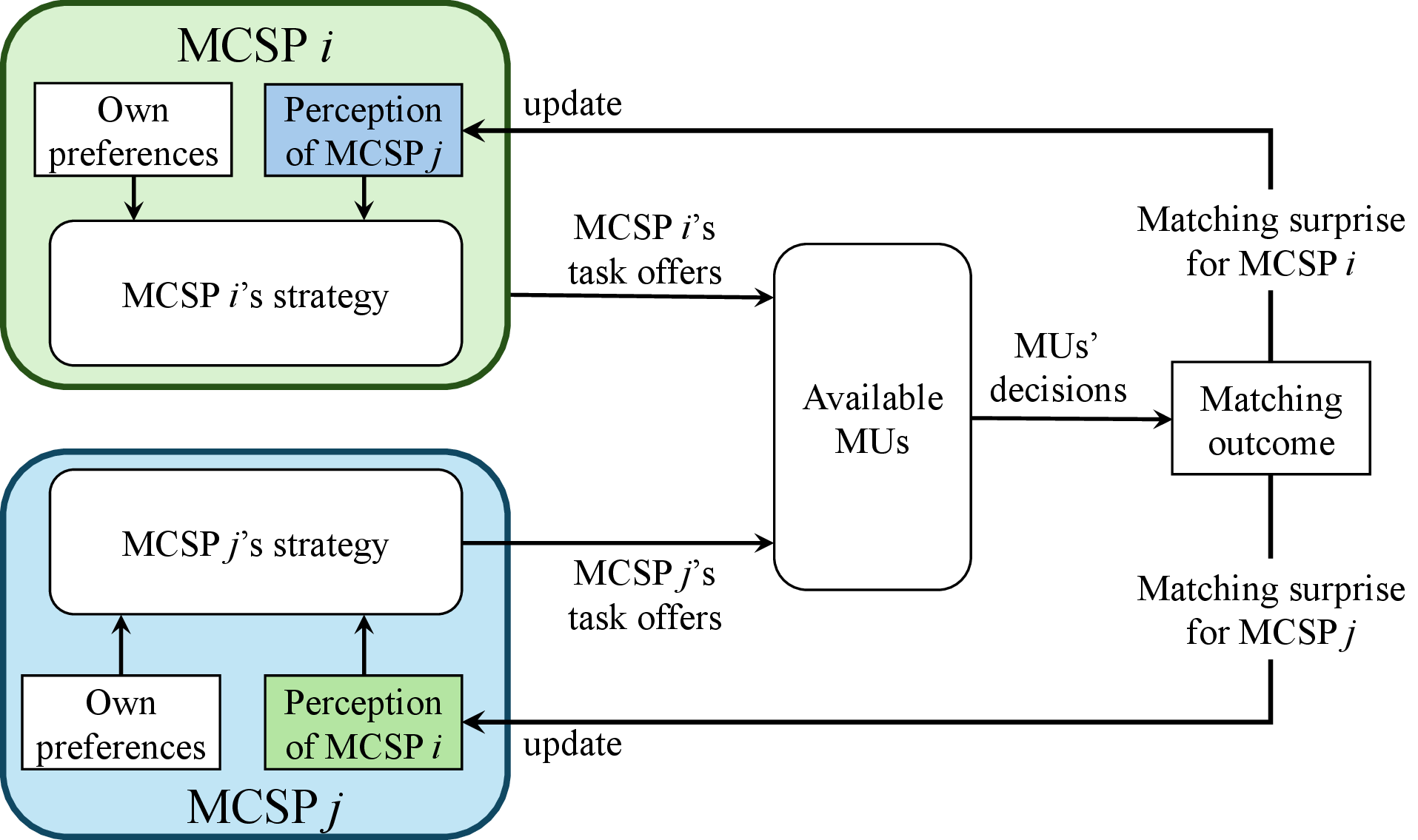}
    \caption{Illustration of interaction between $2$ \glspl{mcsp} in a perception-aware matching solution}
    \label{fig:prism}
\end{figure}

To optimally solve the hypergame $\mathcal{H}_t$ under the presence of \gls{mcsp}-specific perceptions, the \glspl{mcsp} must know their own preferences over all \gls{mu}-task combinations.
Similarly, the \glspl{mu} must know their own preferences over the task types.
The \glspl{mcsp} maintain their own perceptions about the preferences of the other \glspl{mcsp} and update them over repeated interactions.
Using this information, every \gls{mcsp} obtains a task proposal strategy that maximizes its own utility while accounting for the competition.
Similarly, the \glspl{mu} accept the offers that maximize their own utilities.
The success of the task proposal and task acceptance strategies resulting from the solution of $\mathcal{H}_t$ highly depends on the accuracy of the perceptions.
\begin{algorithm}[t]
    \caption{MCSP $i$'s perception-aware matching algorithm}\label{alg:proposed_algorithm_PerfectHG}
    \begin{algorithmic}[1]
    \footnotesize
        \REQUIRE $\theta^i_i$.
        \STATE \textbf{Initialization:}
        \STATE Initialize $\theta^{i}_{-i,k,z}=0$, $P^{i,\mathrm{min}}_{k,n,t}=0$.
        \FOR{$t=1, 2,\ldots,T$}
            \STATE Initialize $\mathcal{O}^i_t$
            \STATE Draw a random variable $v$ from $\mathcal{U}[0, 1]$.
            \IF{$\epsilon_t > v$}
                \FOR{$k = 1, 2, \ldots, K$}
                    \STATE Choose $O^i_{n,t} \leftarrow z$ from a uniform random distribution of available task type $z \in \mathcal{O}^i_{z,t}$.\hfill \algorithmiccomment{Exploration}
                    \STATE Draw random variable $v_a$ from $\mathcal{U}[0,1]$.
                    \IF{$v_a < 0.5$}
                        \STATE $P^i_{k,n,t} \leftarrow P^{i,\mathrm{min}}_{k,n,t}$
                    \ELSE
                        \STATE $P^i_{k,n,t} \leftarrow \theta^i_{i,k,z}$
                    \ENDIF
                    \STATE Send sensing offer $\hat{O}^i_{k,n,t} = \langle O^i_{n,t}, P^i_{k,n,t} \rangle$
                \ENDFOR
            \ELSE
                \STATE Get $k, O^i_{n,t} \quad \forall k, n$. \hfill
                \algorithmiccomment{Exploitation}
                \STATE Calculate shadow price $\Bar{P}^i_{k,z}$ by removing $z$ from $\theta^i_{-i}$.
                \STATE Set $P^i_{k,n,t} = \max(P^{i,\mathrm{min}}_{k,n,t}, \Bar{P}^i_{k,z})$.
                \STATE Send sensing offer $\hat{O}^i_{k,n,t} = \langle O^i_{n,t}, P^i_{k,n,t} \rangle$.
            \ENDIF
            \STATE Get MU's response to the task offer.
            \FOR{\textbf{each} offer $O^i_{k,n,t}$}
                \IF{$O^i_{k,n,t}$ is \textit{accepted}, i.e., $y^i_{k,n,t}=1$}
                    \STATE MU $k$ performs the task, \gls{mcsp} $i$ observes $U^{\mathrm{MCSP},i}_{k,n,t}$, $q^i_{k,n,t}$.
                \ELSE
                    \IF{Rejection due to \textit{Negative Utility}}
                        \STATE Update $P^{i,\mathrm{min}}_{k,n,t} \leftarrow P^{i,\mathrm{min}}_{k,n,t} + 1$
                    \ELSE
                        \STATE MU $k$ accepts MCSP $j$'s task offer
                        \STATE Update $\theta^{i}_{j,k,z} = \mathrm{max}\{\theta^{i}_{j,k,z}, P^{j}_{k,n,t}\}$
                    \ENDIF
                \ENDIF
            \ENDFOR
            \STATE Update $\epsilon_t = \epsilon_{t-1} * \epsilon_d$
        \ENDFOR
    \end{algorithmic}
\end{algorithm}

In this section, we present a perception-aware matching algorithm to optimally solve the dynamic hypergame $\mathcal{H}_t$ formulated in Section \ref{sec:problem_formulation}.
Using this algorithm, every \gls{mcsp} obtains a task proposal strategy which maximizes its own utility.
For this algorithm, we assume that every MCSP $i$ knows its own expected revenue $\theta^i_{i,k,z}\triangleq \mathbb{E}\{w^i_{k,z}\}$ for each \gls{mu}-task type pair $(k,z)$.
The assumption implies that every \gls{mcsp} knows the qualities of every \gls{mu} for every task type $z$.
On the \gls{mu}-side, we assume that every \gls{mu} knows the task execution efforts for all task types in advance.
Note that the availability of such information to any of the entities in advance is unrealistic, and makes it impossible to implement the perception-aware matching solution in real-world \gls{mcs} applications.
However, we present this algorithm as a theoretical upper bound solution for the dynamic hypergame.
We also provide the equilibrium analysis of this solution to illustrate that it is a stable hyper Nash equilibrium solution.
The known expected revenues and the possible payments induce preference orderings of \gls{mcsp} $i$ over all \gls{mu}-task combinations.
Note that MCSP $i$ does not know the expected revenues of other \glspl{mcsp} and maintains its perceptions $\{\theta^i_{j,k,z}\}_{\forall j \in \mathcal{I},j\neq i}$ which are updated from observed outcomes or surprises.
To simplify the notation, we use $\theta^i_{-i}$ to denote the perception  \gls{mcsp} $i$ has about all other \glspl{mcsp} for all \glspl{mu} and task types.

The core idea of the algorithm is that every \gls{mcsp} uses its preferences over \glspl{mu} and tasks and its perceptions to devise a task proposal strategy.
This is illustrated in Fig.~\ref{fig:prism}.
At first, the perceptions are inaccurate, and thus the resulting task proposal strategy is also suboptimal.
Over repeated interactions, the \glspl{mcsp} observe the outcomes of the game and use these outcomes to update their perceptions about the other \glspl{mcsp}.
As the perception error, i.e., the difference between the true expected revenues of the other \glspl{mcsp} and the perceived values, reduces, the task proposal strategy improves.
To learn their perceptions, the \glspl{mcsp} employ an $\epsilon$-greedy algorithm which balances between exploring new MU-task combinations at different payments and exploiting their current perceptions to obtain offers which maximize the expected utility using the $\epsilon$ parameter.

\subsection{Solution description}
As shown in Alg.~\ref{alg:proposed_algorithm_PerfectHG}, every \gls{mcsp} $i$ initializes $\theta^i_{-i}$ and the minimum payment matrix $P^{i,\mathrm{min}}_{k,z}$ per \gls{mu} $k$ and task type $z$ with zeros (Line 2).
In every time step $t$, the \gls{mcsp} $i$ obtains the set $\mathcal{O}^i_t$ of tasks it has to assign to the available \glspl{mu} (Line 4).
With probability $\epsilon_t$, the \gls{mcsp} explores or else, exploits (Line 5,6).
In the exploration phase, the \gls{mcsp} probes the environment by offering a random payment between $P^{i,\mathrm{min}}_{k,z}$ and its own expected revenue $\theta^i_{i,k,z}$ with a probability $\epsilon_a$ (Line 10-14).
The \gls{mcsp} creates a task offer $\Hat{O}^i_{k,n,t}=\langle O^i_{n,t}, P^i_{k,n,t}\rangle$ for every \gls{mu} (Line 15).

In the exploitation phase, the \gls{mcsp} $i$ utilizes its perceptions $\theta^i_{-i}$ about the other \glspl{mcsp} about the willingness of the other \glspl{mcsp} to offer tasks to the \glspl{mu}.
\gls{mcsp} $i$ then uses the well-known Hungarian algorithm \cite{Kuhn_hungarian_1955, munkres1957algorithms} to obtain a task proposal strategy from the expected revenue $\theta^i_i$ of itself and its perception $\theta^{i}_{-i}$ about the other \glspl{mcsp} (Line 18).
The Hungarian algorithm essentially identifies task proposals that will maximize the expected utility of the \gls{mcsp}.
Additionally, the Hungarian algorithm also estimates proposals of the other \glspl{mcsp} from the perceptions of \gls{mcsp} $i$.
The next step is to identify suitable payments for the MU-task combinations found using the Hungarian algorithm.
\gls{mcsp} $i$ estimates how valuable a task type $z$ is to its competitor by calculating the decrease in the overall expected utility of the competitor if it had one less task of that type.
From this value, \gls{mcsp} $i$ evaluates how much it needs to offer to outbid the competitor for some \gls{mu} (Line 20).
After sending the sensing offer to the \glspl{mu}, every \gls{mcsp} waits for their responses (Line 21).
If the offer $\Hat{O}^i_{k,n,t}$ is accepted, i.e., $y^i_{k,n,t}=1$, \gls{mu} $k$ performs task $O^i_{n,t}$ (Line 25).
With this assignment, the \gls{mcsp} $i$ achieves $U^{\mathrm{MCSP},i}_{k,n,t}$ and pays $P^i_{k,n,t}$ to the \gls{mu} $k$ (Line 26).
If the offer is rejected, i.e. $y^i_{k,n,t}=0$, the \gls{mu} conveys this decision to the \gls{mcsp} $i$ along with a reason.
In case of a rejection due to a negative utility, the \gls{mcsp} $i$ updates $P^{i,\mathrm{min}}_{k,z}$ by increasing the payment index $p$ by one (Line 28-29).
Otherwise, the rejection is because the \gls{mu} $k$ chose the offer of some other \gls{mcsp} $j$ with $j\neq i$.
This is a matching surprise for the \gls{mcsp} $i$ which then triggers the perception update as $\theta^{i}_{j,k,z} = P^{j}_{k,n,t}$ for the accepted \gls{mcsp} $j$ (Line 32).
Finally, $\epsilon_t$ is updated for the next time step (Line 36).

\subsection{Equilibrium analysis}
\label{subsec:equilibrium_analysis_HNE}
We will now analyze the strategies of the competing \glspl{mcsp} obtained from the perception-aware matching algorithm to solve the hypergame $\mathcal{H}_t$.
We show that the algorithm reaches a \textit{Hyper Nash Equilibrium} (HNE) \cite{Kovach_Gibson_Lamont_HG_2015, Sasaki_2008} under the existence of \gls{mcsp}-specific perceptions.
We collect all the matching decisions made in time step $t$ in matrix $\textbf{Y}_t$.

The \gls{mcsp}-specific perceptions induce \gls{mcsp}-specific perceived expected utility $U^{\mathrm{MCSP},i}_t({Y}^i_t;\theta^i)$, where $\theta^i=[\theta^i_i \theta^i_{-i}]$.
Every \gls{mcsp} $i$ aims to obtain a task proposal strategy which is its best response to the perceptions $\theta^i_{-i}$ about the other \glspl{mcsp}' that maximizes its perceived expected utility.
As discussed in the previous section, due to the existence of perceptions, every \gls{mcsp} plays a game $\mathcal{G}^i_t$ which is its own subjective game due to its own perceptions.
We define a hyper Nash equilibrium subject to the \gls{mcsp}-specific perceptions of all \glspl{mcsp}.
\begin{definition}
    For a fixed perception profile $\theta = (\theta^i)_{i\in\mathcal{I}}$, a joint task assignment $\textbf{Y}^*=(\textbf{Y}^i_t)_{i\in \mathcal{I}}$ is a Hyper Nash Equilibrium (HNE) solution for every \gls{mcsp} $i \in \mathcal{I}$ if, $\textbf{Y}^* \in \arg\max_{\textbf{Y}^i_t \in \textbf{Y}_t} U^{\mathrm{MCSP},i}_t(\textbf{Y}^i_t;\theta^i)$. Equivalently, for all unilateral deviations, $\Tilde{\textbf{Y}}^i_t \in \textbf{Y}_t$, $U^{\mathrm{MCSP},i}_t(\textbf{Y}^i_t;\theta^i) \geq U^{\mathrm{MCSP},i}_t(\Tilde{\textbf{Y}}^i_t;\theta^i)$.
\end{definition}
This means that each \gls{mcsp} $i$'s offers must be its best response to its perceptions about the other \glspl{mcsp} in its perceived game $\mathcal{G}^i_t$.
Due to the misperceptions, even with the best response strategy, \glspl{mcsp} can experience matching surprises in the MCS system.
The matching surprises occur when an \gls{mu} was expected to accept a task offer but rejected or when an \gls{mu} accepted a task offer at a lower payment than before.
The perceptions are updated when a matching surprise is encountered.
The mean absolute error between the true expected revenue of the \gls{mcsp} $j$, i.e., $\theta^{j}_{j}$, and the perception of this parameter $\theta^i_{j}$ maintained by \gls{mcsp} $i$ is called perception error of \gls{mcsp} $i$ given by, $\Delta {\theta^i}(t) = \sum_{j} \mathbb{E}\{|\theta^{j}_{j}-\theta^i_{j}|\} \forall j \in \mathcal{I}$ where $j\neq i$.
In our \gls{mcs} system, $\Delta {\theta^i}(t)$ is a monotonically decreasing function since the perceptions $\theta^i_{-i}$ are monotonically increasing and the true revenue $\theta^{j}_{j} \, \forall j\in \mathcal{I}$ is fixed.
In the context of a perception error, an HNE solution is a stable HNE (SHNE) if the profile of the strategies is a NE in the subjective games of the \glspl{mcsp}, i.e., SHNE$(\mathcal{H}_t)=\cap_{i\in\mathcal{I}}\mathcal{N}(\mathcal{G}^i_t)$ \cite{Sasaki_2008, Kovach_Gibson_Lamont_HG_2015}.
In a repeated matching game with contracts, this means that the perceptions have stabilized to a constant value and no player has an intrinsic motivation to update its perception to improve its expected utility $U^{\mathrm{MCSP},i}_t({Y}^i_t ;\theta^i)$ \cite{Sasaki_2008, Kovach_Gibson_Lamont_HG_2015}.
When the perceptions converge, i.e., $\Delta \theta^i(t) = 0$, each \gls{mcsp} $i$ achieves an optimal best response strategy.
This is an SHNE solution that the \glspl{mcsp} have achieved as SHNE$({\mathcal{H}_t})=\times_{i\in\mathcal{I}}\mathcal{N}(\mathcal{G}^i_t)$.
When the misperceptions disappear from every \gls{mcsp}, all of the \glspl{mcsp} are practically playing the base game $\mathcal{G}$ with complete information which helps them achieve their highest possible utility $U^{*\mathrm{MCSP},i}_{t}$.
At this point, for every \gls{mcsp} $i$, the strategies that other \glspl{mcsp} have chosen are consistent with the \gls{mcsp} $i$'s anticipation and there is no incentive to update their perceptions further, i.e., a stable solution is achieved \cite{Sasaki_2008, Kovach_Gibson_Lamont_HG_2015}.

In the following, we summarize the assumptions made for the perception-aware matching algorithm.
\begin{assumption}\label{assump:rational}
All MCSPs maximize their expected utility:
$$\boldsymbol{Y}^i_t \in \argmax_{\boldsymbol{Y}^i \in \mathcal{Y}_i} U^{\text{MCSP},i}_t(\boldsymbol{Y}^i; \theta^i)$$
\end{assumption}
\begin{assumption}\label{assump:truthful}
When \gls{mu} $k$ rejects offer $\Hat{O}^i_{k,n,t}$ from \gls{mcsp} $i$ because it accepted offer $\Hat{O}^j_{k,n',t}$ from \gls{mcsp} $j \neq i$, the \gls{mu} truthfully reveals the tuple $(j, P^j_{k,n',t}, z)$ to \gls{mcsp} $i$. % with probability 1.
\end{assumption}
\begin{assumption}\label{assump:bounded}
There exists $M < \infty$ such that $\theta^i_{i,k,z} \leq M$ for all $i \in \mathcal{I}$, $k \in \mathcal{K}$, $z \in \mathcal{Z}$.
\end{assumption}
\begin{assumption}\label{assump:exploration}
The exploration rate $\{\epsilon_t\}_{t=0}^{T-1}$ satisfies
$$\sum_{t=0}^{\infty} \epsilon_t = \infty \quad \text{and} \quad \sum_{t=0}^{\infty} \epsilon_t^2 < \infty.$$
\end{assumption}
\begin{theorem}\label{thm:prism_convergence}
Under Assumptions~\ref{assump:rational}-\ref{assump:exploration}, the perception-aware matching algorithm satisfies the following properties:
\begin{enumerate}
    \item \textbf{Monotone Perception Convergence:} The perception error
    $\Delta\theta^i(t) = \sum_{j \neq i} \sum_{k,z} \mathbb{E}[|\theta^j_{j,k,z} - \theta^i_{j,k,z}(t)|]$
    is monotonically non-increasing in $t$.
    
    \item \textbf{Almost Sure Convergence:}
    $\lim_{t \to \infty} \Delta\theta^i(t) = \Delta\theta^i_\infty \quad \text{almost surely}$
    where $\Delta\theta^i_\infty \geq 0$ is the residual perception error.
    
    \item \textbf{Exponential Convergence Rate:} There exist constants $\lambda > 0$ and $\epsilon_{\text{noise}} \geq 0$ such that
    $\mathbb{E}[\Delta\theta^i(t)] \leq \Delta\theta^i(0) \cdot e^{-\lambda t} + \frac{\epsilon_{\text{noise}}}{\lambda}$
    
    \item \textbf{Utility Convergence:} Let $U^{\text{MCSP},i}_{\text{SHNE}}$ denote the utility at the stable hyper Nash equilibrium. Then
    $$\liminf_{t \to \infty} \mathbb{E}[U^{\text{MCSP},i}_t] \geq U^{\text{MCSP},i}_{\text{SHNE}} - L \cdot \Delta\theta^i_\infty$$
    where $L$ is the Lipschitz constant of the utility function.
\end{enumerate}
\end{theorem}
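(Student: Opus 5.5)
The plan is to reduce everything to the update rule in Line~32 of Alg.~\ref{alg:proposed_algorithm_PerfectHG}, $\theta^i_{j,k,z}(t+1)=\max\{\theta^i_{j,k,z}(t),P^j_{k,n,t}\}$, together with the fact that a rational MCSP $j$ (Assumption~\ref{assump:rational}) never offers a payment exceeding its own expected revenue, i.e.\ $P^j_{k,n,t}\le\theta^j_{j,k,z}$. The latter holds because such an offer has negative expected utility and is dominated by not offering.

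Starting from $\theta^i_{j,k,z}(0)=0$, induction on $t$ gives the sandwich $0\le\theta^i_{j,k,z}(t)\le\theta^i_{j,k,z}(t+1)\le\theta^j_{j,k,z}\le M$, using Assumption~\ref{assump:bounded} for the last bound. Hence $|\theta^j_{j,k,z}-\theta^i_{j,k,z}(t)|=\theta^j_{j,k,z}-\theta^i_{j,k,z}(t)$ is pathwise non-increasing. Taking expectations and summing over $j\neq i,k,z$ yields property~1. For property~2, each coordinate $\theta^i_{j,k,z}(t)$ is a bounded monotone sequence, so it converges pathwise, and hence almost surely. Thus the pathwise error converges a.s., and by dominated convergence (bound $\sum_{j,k,z} M$) so does its expectation, to a limit $\Delta\theta^i_\infty\ge0$. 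Assumption~\ref{assump:truthful} is what guarantees the observed $P^j$ is the true offer, so the sandwich is not broken by misreports.

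For property~3, I would derive a one-step contraction. Let $e_{j,k,z}(t)=\theta^j_{j,k,z}-\theta^i_{j,k,z}(t)$. With probability at least $\epsilon_t\cdot\pi_{\min}$, where $\pi_{\min}>0$ comes from uniform task-type selection and the $1/2$ payment coin, MCSP $i$ explores with an offer that loses to $j$ on $(k,z)$. This triggers a surprise revealing $P^j$. I would then argue that $j$'s competitive offer is at least a fixed fraction $\gamma$ of the remaining gap in expectation, up to a residual $\epsilon_{\mathrm{noise}}$ reflecting discrete payment levels and cases where $j$ never bids near $\theta^j_j$. This gives $\mathbb{E}[\Delta\theta^i(t+1)]\le(1-\lambda_t)\mathbb{E}[\Delta\theta^i(t)]+\epsilon_{\mathrm{noise}}$. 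Taking $\lambda=\inf_t\lambda_t$ (absorbing the fact that $\epsilon_t\to0$, with exploitation-phase surprises providing a positive floor) and using $1-x\le e^{-x}$ with a discrete Grönwall/geometric sum gives the stated bound $\Delta\theta^i(0)e^{-\lambda t}+\epsilon_{\mathrm{noise}}/\lambda$.

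\textbf{This is the main obstacle.} A uniform positive $\lambda$ is not implied by $\epsilon_t\to0$ alone, and I expect to have to define $\lambda$ from exploitation-driven surprises or simply let $\epsilon_{\mathrm{noise}}$ absorb the slack. Assumption~\ref{assump:exploration} ($\sum\epsilon_t=\infty$) is used only to ensure that every $(j,k,z)$ is probed infinitely often, via Borel--Cantelli II, which identifies $\Delta\theta^i_\infty$ with the error irreducible by competition.

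For property~4, the plan is to view the exploitation step (Hungarian assignment plus the shadow price in Line~20) as a map $\theta^i_{-i}\mapsto \mathbf{Y}^i_t$. I would assume the perceived expected utility $U^{\mathrm{MCSP},i}_t(\cdot;\theta^i)$ is $L$-Lipschitz in $\theta^i_{-i}$ in the $\ell_1$ norm. Payments enter linearly with coefficient at most one per contract, and the optimal value of a linear assignment problem is Lipschitz in its cost entries, so $L\le K$ suffices. The value at exact perceptions is $U^{\mathrm{MCSP},i}_{\mathrm{SHNE}}$, so $\mathbb{E}[U^{\mathrm{MCSP},i}_t]\ge U^{\mathrm{MCSP},i}_{\mathrm{SHNE}}-L\,\mathbb{E}[\Delta\theta^i(t)]-O(\epsilon_t M)$, where the last term accounts for exploration losses. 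Taking $\liminf$, and using $\epsilon_t\to0$ (from $\sum\epsilon_t^2<\infty$) together with property~2, gives the claim.
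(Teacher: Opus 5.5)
Your treatment of properties 1 and 2 is sound. Property 1 is the paper's argument: the max-update combined with $P^j_{k,n,t}\le\theta^j_{j,k,z}$ from rationality. The paper writes an $\epsilon_t M$ exploration slack and then silently drops it. You instead take Assumption~1 to bind at all times, which is what the paper's claim $\delta^i_{j,k,z}(t)\ge 0$ actually requires.

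For property 2 you take a different and cleaner route. The paper introduces a Lyapunov function and invokes the Robbins--Monro theorem, although the max-update is not a stochastic-approximation recursion with gains $\epsilon_t$. Your argument via bounded monotone coordinates and dominated convergence is complete. Indeed, since the expectation is built into the definition of $\Delta\theta^i(t)$, it is a deterministic non-increasing sequence bounded below by $0$, so property 2 is immediate from property 1.

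For property 3, the obstacle you flag is real for any genuine rate, for three reasons:
\begin{itemize}
\item When $i$'s exploratory offer loses, $i$ learns only $j$'s \emph{current} bid. In exploitation this is the shadow price $\max(P^{j,\min}_{k,n,t},\bar P^{j}_{k,z})$, which may stay well below $\theta^j_{j,k,z}$.
\item $j$ bids its full valuation by design only in its own exploration steps, with probability of order $\epsilon_t\to 0$.
\item Exploitation surprises disappear once perceptions stabilize.
\end{itemize}
So neither the ``fixed fraction $\gamma$ of the remaining gap'' nor a uniform $\lambda>0$ follows from Assumptions 1--4. The Borel--Cantelli step is also unnecessary, and it would need a conditional form because the probing events are history-dependent. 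The paper does not overcome this either. It simply posits constant per-step observation probabilities $\gamma_{j,k,z}\in(0,1)$, with each observation resetting the error, i.e.\ the linear system $\vec{\delta}^i(t+1)=(I-\Gamma)\vec{\delta}^i(t)+\vec{\nu}(t)$.

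However, the statement as written needs no rate at all. For any $\lambda>0$, put $\epsilon_{\text{noise}}=\lambda\,\Delta\theta^i(0)$. Then property 1 gives $\mathbb{E}[\Delta\theta^i(t)]=\Delta\theta^i(t)\le\Delta\theta^i(0)\le\Delta\theta^i(0)e^{-\lambda t}+\epsilon_{\text{noise}}/\lambda$. Your fallback of letting $\epsilon_{\text{noise}}$ absorb the slack is therefore already a complete proof, and the contraction argument can be dropped.

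The genuine gap is in property 4. You posit that the \emph{perceived} utility $U^{\mathrm{MCSP},i}_t(\cdot;\theta^i)$ is $L$-Lipschitz in $\theta^i_{-i}$. You then apply the resulting bound to the \emph{realized} expected utility $\mathbb{E}[U^{\mathrm{MCSP},i}_t]$. These two differ precisely when matching surprises occur. With under-estimated competitor bids, the perceived value can exceed what $i$ actually earns, so it gives no lower bound on the latter. The assignment-LP sensitivity argument behind $L\le K$ only controls the perceived optimal value. The realized utility is discontinuous in the perceptions: an arbitrarily small under-estimate of $j$'s bid for MU $k$ can make $i$'s shadow-price offer lose $k$ and forfeit the whole margin $\theta^i_{i,k,z}-P^i_{k,n,t}$.

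What is needed, and what the paper asserts directly, is a Lipschitz bound on the realized map evaluated at the true competitor values: $|U^{\mathrm{MCSP},i}(\mathbf{Y}^i;\theta^{-i}_{-i})-U^{\mathrm{MCSP},i}(\mathbf{Y}^{i*};\theta^{-i}_{-i})|\le L\,\|\theta^i_{-i}-\theta^{-i}_{-i}\|$, with $\mathbf{Y}^i$ the perception-based assignment. The statement's $L$ presupposes such a constant, but it is an assumption, not a quantity of order $K$ that can be derived. With it in place, your $\liminf$ step goes through, including the $O(\epsilon_t M)$ exploration term that the paper omits.

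Note also that you identify $U^{\mathrm{MCSP},i}_{\mathrm{SHNE}}$ with the exact-perception value. The paper instead sets $U^{\mathrm{MCSP},i}_{\mathrm{SHNE}}=U^{\mathrm{MCSP},i}_{\mathrm{opt}}-L\,\Delta\theta^i_\infty$ and proves $\liminf_{t\to\infty}\mathbb{E}[U^{\mathrm{MCSP},i}_t]\ge U^{\mathrm{MCSP},i}_{\mathrm{opt}}-L\,\Delta\theta^i_\infty$. That is your inequality with the perfect-information value renamed, so the stated bound follows under either reading.
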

\begin{IEEEproof}
See Appendix. %~\ref{proof_theorem1}.
\end{IEEEproof}

\vspace{-6mm}
\section{Proposed Algorithm}
\label{sec:learning_aided_hypergame_solution}
\subsection{Overview}
\begin{algorithm}[t]
    \caption{MCSP $i$'s learning-based perception-aware matching algorithm}\label{alg:proposed_algorithm}
    \begin{algorithmic}[1]
        \footnotesize
        \STATE \textbf{Initialization:}
        \STATE UCB values for $\{\langle k, z, p \rangle : 0\} \forall k, z, p$, $\gamma^{\mathrm{win},i}_{k,z}=0$, $\gamma^{\mathrm{lost},i}_{k,z}=0$
        \FOR{$t=1, 2,\ldots,T$}
            \STATE $\zeta^\mathrm{a}_t, \rho^{\mathrm{a},i}_{z} = \phi$.
            \STATE Check available tasks $\mathcal{A}^i_t$.
            \STATE Compute feasible set $\mathcal{\Tilde{O}}^{\mathrm{feas},i}_{t}$ based on $\theta^{i}_{-i}$, $\gamma^{\mathrm{win},i}_{k,z}$, and $\gamma^{\mathrm{lost},i}_{k,z}$.
            \FOR{$k=1,\ldots,K$}
                \FOR{$z=1,\ldots,Z$}
                    \IF{$z\in \mathcal{O}^i_{z,t}$ and $(k,z) \in \mathcal{\Tilde{O}}^{\mathrm{feas},i}_{t}$}
                        \FOR{$p=1,\ldots,P$}
                        \STATE Get UCB values in a set $\{\langle k, z, p \rangle : \mathrm{UCB}^i_{k,z,p}\}$.
                        \ENDFOR
                    \ENDIF
                \ENDFOR
            \ENDFOR
            \STATE Sort the set $\{\langle k, z, p \rangle : \mathrm{UCB}^i_{k,z,p}\}$ in decreasing order.
            \FOR{\textbf{each} candidate in $\{\langle k, z, p \rangle : \text{UCB score}\}$}
                \IF{$k \in \zeta^\mathrm{a}_t$}
                    \STATE \textit{continue}
                \ENDIF
                \IF{$\rho^{\mathrm{a},i}_{z} \geq \rho^i_z$}
                    \STATE \textit{continue}
                \ENDIF
                \STATE Create offer $O^i_{k,n,t} = \langle O^i_{n,t}, P^i_{k,n,t} \rangle$.
                \STATE Update $\zeta^\mathrm{a}_t \leftarrow k$.
                \STATE Update $\rho^{\mathrm{a},i}_{z} = \rho^{\mathrm{a},i}_{z} + 1$.
            \ENDFOR
            \STATE Get MU's response to the task offer. \hfill \algorithmiccomment{Algorithm \ref{alg:MUs_acceptance_algorithm}}
            \FOR{\textbf{each} offer $O^i_{k,n,t}$:}
                \IF{$O^i_{k,n,t}$ is \textit{accepted}, i.e., $y_{k,n,t}=1$}
                    \STATE Increase acceptance counter $\gamma^{\mathrm{win},i}_{k,z} = \gamma^{\mathrm{win},i}_{k,z} + 1$
                    \STATE MU $k$ performs the task, \gls{mcsp} $i$ observes $U^{\mathrm{MCSP},i}_{k,n,t}$, $q^i_{k,n,t}$.
                    \STATE Update estimates $\hat{U}^{\mathrm{MCSP}^i}_{k,n,t}$.
                \ELSE
                    \STATE Increase the rejection counter $\gamma^{\mathrm{lost},i}_{k,z} = \gamma^{\mathrm{lost},i}_{k,z} + 1$
                    \STATE $\hat{U}^{\mathrm{MCSP},i}_{k,n,t} \leftarrow \hat{U}^{\mathrm{MCSP},i}_{k,n,t-1}$. \hfill
                    \algorithmiccomment{Eq.~(\ref{eq:pacmab_MCSP_utility_update})}
                \ENDIF
            \ENDFOR
        \ENDFOR
    \end{algorithmic}
\end{algorithm}
The perception-aware matching algorithm presented in the previous section solves the dynamic hypergame $\mathcal{H}_t$ under the assumption that \glspl{mcsp} and \glspl{mu} know their own individual preferences.
In realistic scenarios, the \glspl{mcsp} and the \glspl{mu} do not know their own preferences in advance.
Additionally, the \glspl{mcsp} do not know the strategies of other \glspl{mcsp} which, however they need to obtain an optimal best response strategy.
To help the \glspl{mcsp} and the \glspl{mu} to learn their own preferences based on their expected utilities, we propose a fully decentralized perception-aware combinatorial multi-armed bandit (PACMAB) solution.
This algorithm has two components: (i) \glspl{mcsp}' perception-aware online learning algorithm for task assignment strategy, and (ii) \glspl{mu}' multi-armed bandit based online learning for task offer acceptance strategy.
Given these two components, PACMAB is essentially a multi-agent multi-armed bandit algorithm in which different players have different goals and selfishly and independently decide on their strategy.

For the \gls{mcsp}'s task assignment problem, we employ an upper confidence bound (UCB)-based algorithm.
Since the action space of each \gls{mcsp} is prohibitively large, an algorithm capable of systematically addressing the exploration-exploitation trade-off through uncertainty reduction is essential.
To this end, we incorporate perception-based action pruning, which restricts each \gls{mcsp}'s candidate arm set to those expected to yield superior performance, rather than exhaustively evaluating all the possible arms.
This pruning mechanism significantly accelerates the convergence of the proposed PACMAB algorithm.

\vspace{-3mm}
\subsection{Perception-aware combinatorial multi-armed bandit (PACMAB) solution}
The algorithm is presented in Alg.\ref{alg:proposed_algorithm}.
Every \gls{mcsp} initializes the UCB values with zeros.
Additionally every \gls{mcsp} $i$ maintains an acceptance counter $\gamma^{\mathrm{win},i}_{k,z}$ and a rejection counter $\gamma^{\mathrm{lost},i}_{k,z}$ per \gls{mu} and task type combination (Line 2).
These counters keep track of how many times offers involving the $(k,z)$ combination were accepted and rejected, respectively.
In every time step $t$, a set $\zeta^\mathrm{a}_t$ and a vector $\rho^{\mathrm{a},i}_z$ are initialized with $\phi$ and zeros, respectively (Line 4).
The set $\zeta^\mathrm{a}_t$ monitors assigned \glspl{mu} in the current time step $t$ such that one \gls{mu} will receive only one task from \gls{mcsp} $i$.
The vector $\rho^{\mathrm{a},i}_z$ monitors assigned tasks per task type $z$ such that the task quotas $\rho^i_z$ are respected.
The \gls{mcsp} $i$ then checks the available tasks $\mathcal{O}^i_t$ to be performed (Line 5).
Out of all possible task offers, \gls{mcsp} $i$ evaluates a feasible set of task offers $\mathcal{{O}}^{\mathrm{feas},i}_t$ based on acceptance counter $\gamma^{\mathrm{a},i}_{k,z}$ and its perception $\theta^{-i}_i$ about other \glspl{mcsp}.
% Part 1
Considering $\gamma^{\mathrm{win},i}_{k,z}$, an acceptance ratio of every $(k,z)$ is evaluated.
From this, the \gls{mcsp} $i$ evaluates expected utility given the acceptance ratio as $\Hat{U}^{\mathrm{win},i}_{k,n,t} = \Hat{U}^{\mathrm{MCSP},i}_{k,n,t} \frac{\gamma^{\mathrm{win},i}_{k,z}}{(\gamma^{\mathrm{win},i}_{k,z}+\gamma^{\mathrm{lost},i}_{k,z})}$.
The \gls{mcsp} $i$ prunes different $\langle k, z, p \rangle$ combinations for which the expected utility $\Hat{U}^{\mathrm{win},i}_{k,n,t} \leq 0$.
% Part 2
Additionally, the \gls{mcsp} $i$ prunes the payment indices based on $\theta^{-i}_i$.
To do so, it estimates the chances of winning, i.e, attract the \gls{mu} $k$ to perform the task $O^i_{n,t}$, based on the payment index $p$ and own perceptions $\theta^{-i}_i$ as $\text{Pr}^{\mathrm{win},i}(p)= \sum_{p'=0}^p\pi^{-i}_{k,z}(p')$ where $\pi^{-i}_{k,z}$ is a probability mass function of other \gls{mcsp} for the given \gls{mu} $k$ and task type $z$.
All payment levels with a probability of winning below a threshold are removed.
Then, for every \gls{mu}, task type, and the payment index in the feasible task offer set $\mathcal{{O}}^{\mathrm{feas},i}$, we compute the UCB value given by
\begin{equation}
    \mathrm{UCB}^i_{k,z,p} = \Bar{U}^{\mathrm{MCSP},i}_{k,n,t} + \mathrm{UCB}_c\sqrt{\frac{\log(\mathrm{UCB}_t)}{L_{k,z,p}}},
\end{equation}
where $\mathrm{UCB}_c$ and $\mathrm{UCB}_t$ denote the UCB exploration constant and UCB time step index, respectively.
$L_{k,z,p}$ monitors how often the contract $x^i_{k,z,p}$ has been selected (Line 7-15).
Initially, the algorithm explores different contracts to gather more information about them and as the time progresses, the algorithm exploits the gathered information.
The feasible task offer set ${\mathcal{O}^{\mathrm{feas},i}}$ is then sorted according to the decreasing order of UCB values (Line 16).
Out of this set, task offers are chosen respecting task quotas and ensuring that each \gls{mu} receives only one offer per \gls{mcsp}.
Therefore, if \gls{mu} $k$ is already assigned, i.e. $k \in \zeta^{\mathrm{a}}_t$, then we skip all contracts involving \gls{mu} $k$ (Line 18-19).
Similarly, if the task quotas are already exhausted, i.e., $\rho^{\mathrm{a},i}_z = \rho^i_z$, then we skip all the contracts which involve the task type $z$ (Line 20-21).
When a task offer $O^i_{k,n,t}$ is created from the feasible set, the values of $\zeta^\mathrm{a}_t$ and $\rho^{\mathrm{a},i}_z$ are updated (Line 24-26).
The \gls{mcsp} $i$ then sends all the task offers to the respective \glspl{mu} and waits for their responses (Line 28).
For every task offer, the \gls{mu} responds with either an accept or a reject decision.
If the task offer is accepted, i.e. $y^i_{k,n,t}=1$, the acceptance counter $\gamma^{\mathrm{win},i}_{k,z}$ is increased by one (Line 31).
\gls{mu} $k$ performs task $a_{n,t}^i$ and sends the task result back to \gls{mcsp} $i$.
From the result, \gls{mcsp} $i$ evaluates the true quality $q^i_{k,n,t}$ and observes $U^{\mathrm{MCSP},i}_{k,n,t}$ (Line 32).
From the observed utility $U^{\mathrm{MCSP},i}_{k,n,t}$, the estimated expected utility is updated (Line 33) as
\begin{equation}
\label{eq:pacmab_MCSP_utility_update}
    \Hat{U}^{\mathrm{MCSP},i}_{k,n,t} = \Hat{U}^{\mathrm{MCSP},i}_{k,n,t-1} + \frac{U^{\mathrm{MCSP},i}_{k,n,t} - \Hat{U}^{\mathrm{MCSP},i}_{k,n,t}}{L_{k,z,p}}.
\end{equation}
If the offer is rejected, i.e. $y^i_{k,n,t}=0$, then the rejection counter $\gamma^{\mathrm{lost},i}_{k,z}$ is increased by one and the expected utility from the previous time step is used again.

\begin{algorithm}[t]
    \footnotesize
    \caption{MU $k$'s learning-based task offer acceptance algorithm}\label{alg:MUs_acceptance_algorithm}
    \begin{algorithmic}[1]
        \STATE \textbf{Initialize:} $\hat{U}^\mathrm{MU}_{k,z}$ $\forall z \in \mathcal{Z}$.
        \FOR{t=1,\ldots, T}
            \STATE Draw $\epsilon^\mathrm{MU}_k$ from $\mathcal{U}[0,1]$.
            \IF{$\epsilon^\mathrm{MU}_k < \epsilon^a$}
                \STATE Select random task offer $\Hat{O}^i_{k,n,t}$ from the received offers $\forall i \in \mathcal{I}$.
                \STATE Convey the acceptance decision $y^i_{k,n,t}=1$ to the respective \gls{mcsp} $i$.
            \ELSE
                \STATE Select a task offer $\Hat{O}^i_{k,n,t}$ which maximizes the $\hat{U}^\mathrm{MU}_{k,n,t}$
                \STATE Convey the acceptance decision $y^i_{k,n,t}=1$ to the respective \gls{mcsp} $i$.
            \ENDIF
            \STATE Convey the rejection decision $y^{-i}_{k,n,t}=0$ to the other \glspl{mcsp} along with rejection reason.
            \STATE Perform the task $O^i_{n,t}$ and transmit the result $r_{k,n,t}$ to the \gls{mcsp} $i$.
            \STATE Receive the payment $P^i_{k,n,t}$ and observe the $U^\mathrm{MU}_{k,n,t}$, $\totalTime$, and $\totalEnergy$.
            \STATE Update the estimates $\hat{U}^\mathrm{MU}_{k,z}$. \hfill
            \algorithmiccomment{Eq.~(\ref{eq:pacmab_MU_utility_update})}
        \ENDFOR
    \end{algorithmic}
\end{algorithm}
The \glspl{mu} do not know the efforts required to perform different tasks in advance, and have to learn them over time.
To learn these task efforts, every \gls{mu} $k$ runs a multi-armed bandit in which it learns about the task efforts for each task type $z$.
\gls{mu} $k$'s learning-aided task acceptance algorithm is given in Algorithm~\ref{alg:MUs_acceptance_algorithm}.
Every \gls{mu} $k$ initializes its expected utility $\Hat{U}^\mathrm{MU}_{k,z}$ for every task type $z$ with zeros (Line 1).
In every time step $t$, every \gls{mu} $k$ draws a random variable $\epsilon^\mathrm{MU}_k$ between $[0,1]$.
If $\epsilon^\mathrm{MU}_k < \epsilon^a$, the \gls{mu} explores, else exploits (Line 3).
In the exploration phase, the \gls{mu} randomly selects a task offer from one  of the offering \glspl{mcsp} (Line 4).
In the exploitation phase, the \gls{mu} $k$ selects the task offer $\Hat{O}^i_{k,n,t}$  that maximizes its expected utility $\Hat{U}^\mathrm{MU}_{k,n,t}$ (Line 7).
After a decision has been made, \gls{mu} $k$ informs the respective \gls{mcsp} $i$ with an acceptance decision, i.e. $y^i_{k,n,t}=1$ (Line 8).
\gls{mu} $k$ conveys its rejection decision along with the reason to all the other \glspl{mcsp} (Line 9).
If the task is accepted, \gls{mu} $k$ performs task $O^i_{n,t}$ and transmits the result $r_{k,n,t}$ back to \gls{mcsp} $i$.
Afterwards, it receives payment $P^i_{k,n,t}$ and observes  $U^\mathrm{MU}_{k,n,t}$.
The expected utility estimate is updated as
\begin{equation}
\label{eq:pacmab_MU_utility_update}
    \Hat{U}^\mathrm{MU}_{k,z} = \Hat{U}^\mathrm{MU}_{k,z} + \frac{(U^\mathrm{MU}_{k,n,t} -\Hat{U}^\mathrm{MU}_{k,z})}{M_z},
\end{equation}
where $M_z$ represents the number of times \gls{mu} $k$ has performed tasks of type $z$.

\vspace{-4mm}
\subsection{Stability and convergence of PACMAB}
In this section, we discuss the stability and convergence properties of the proposed PACMAB algorithm.
As is common in multi-agent reinforcement learning (MARL) frameworks, deriving formal theoretical guarantees for stability and convergence is challenging.
This difficulty is compounded in the case of PACMAB, which operates within a MARL framework wherein heterogeneous learning agents (MABs) interact in a competitive setting \cite{liu2021bandit, zhang2021multiagentreinforcementlearningselective}.
In such environments, even simple example instances can give rise to considerable analytical complexity, rendering closed-form convergence proofs intractable \cite{zhang2021multiagentreinforcementlearningselective}.
Nevertheless, through extensive numerical evaluation, we empirically demonstrate that the PACMAB algorithm converges to the solution obtained by the PRISM algorithm, which has been formally shown to attain the stable hyper-Nash equilibrium (SHNE), i.e., stable assignments under \gls{mcsp}-specific perceptions.

\vspace{-5mm}
\subsection{Computational complexity analysis}
Since PACMAB is fully decentralized, we analyze the complexity from the perspective of the entity that runs the algorithm.
As \gls{mcsp} $i$ runs a combinatorial UCB algorithm (Algorithm~\ref{alg:proposed_algorithm}) in every time step $t$, we first take a look at the worst case complexity.
In lines 4-15 we can see that the algorithm computes a UCB value for each arm with complexity $O(1)$.
For $K$ \glspl{mu}, $Z$ task types, and $P$ payment levels, the worst case complexity of computing the UCB value is $O(KZP)$.
Afterwards, in line 16, these UCB values are sorted, which has complexity of $O(KZP\log(KZP))$ \cite{Zhang_2024_MAB_UCB_FL}.
The rest of the algorithm has complexity of $O(N)$ since it involves updating the UCB values of the $N$ selected arms.
Thus, the total computational complexity of the algorithms is given as $O(KZP)+O(KZP\log(KZP))+O(N) = O(KZP \log(KZP))$ because the dominant term is $O(KZP \log(KZP))$ as $ZP\gg\log N$ in our case.
Thus, over the time horizon $T$, the complexity becomes $O(TKZP)$.
This complexity is reasonable for the MCSP since it has linear dependence on the time horizon, number of participating \glspl{mu}, number of available task types, and the payment levels.
MCSPs are equipped with sufficient computational capacity to implement an algorithm such as PACMAB.

At the \gls{mu}-side, Algorithm~\ref{alg:MUs_acceptance_algorithm} is used to learn about $Z$ different task types by selecting one task at a time.
The algorithm computes the expected utility of each task type $z$ by performing a task and updating the estimate using the sampled efforts. Such operation has complexity of $O(1)$ (Line 4-13).
Since each \gls{mu} has to decide whether to accept the task or not, out of maximum $I$ offers, the algorithm in time step $t$ has computing complexity of $O(I)$ \cite{Bernd_WSaad_OSL_2024}.
Over the entire time horizon $T$, the resulting complexity
is $O(TI)$.
This complexity is reasonable since the MUs are typically simple devices with limited computational capacity.
Our proposed PACMAB algorithm respects this constraint and enables the MUs to make decisions which maximize their achieved utilities at a low computational cost.

Note that for both, the \glspl{mcsp} and the \glspl{mu}, the communication overhead required for matching is low.
The \gls{mcsp} sends task offers to each \gls{mu} which contains only the task type $z$ and the payment information.
The task acceptance as well as the task rejection with reason is a short message which the \glspl{mu} transmit back to the respective \glspl{mcsp}.
The \glspl{mu} then perform the accepted task and transmit the result back to the respective \glspl{mcsp}.
\vspace{-4mm}
\section{Simulation Results and Analysis}
\label{sec:numerical_evaluation}

\begin{figure*}[t]
\centering
\vspace{-15mm}
% ---------- Legend ----------
\includegraphics[width=0.9\linewidth]{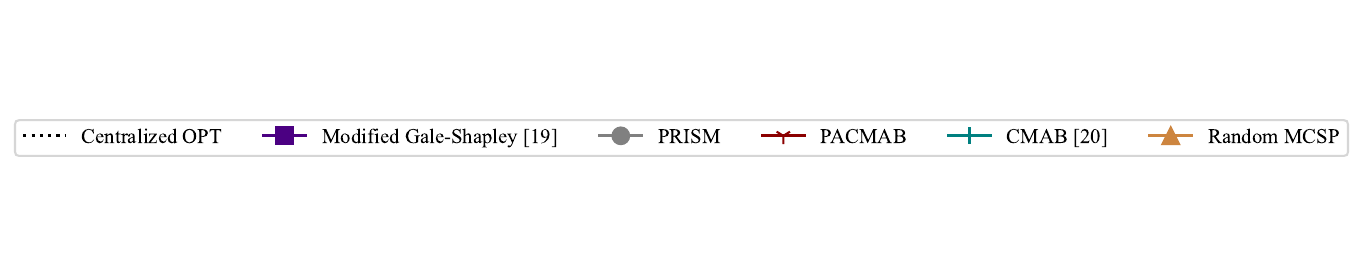}
\vspace{-16mm} % Next blank line is important

% ---------- Row 1: (a)(b)(c) ----------
\subfloat[\footnotesize Achieved average social welfare]{
  \includegraphics[width=0.32\linewidth, trim=0 8pt 0 0, clip]{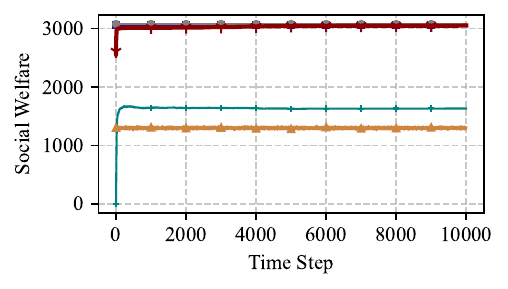}
  \label{fig:system_social_welfare}
}
\subfloat[\footnotesize Achieved average MCSP utility]{
  \includegraphics[width=0.32\linewidth, trim=0 8pt 0 0, clip]{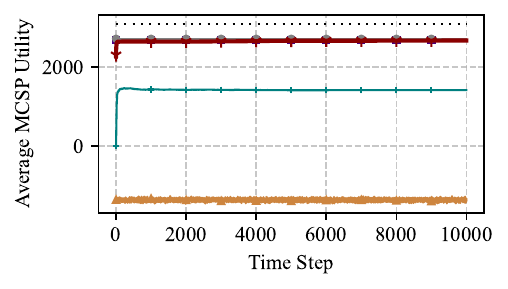}
  \label{fig:system_mcsp_utility}
}
\subfloat[\footnotesize Achieved average MU utility]{
  \includegraphics[width=0.32\linewidth, trim=0 8pt 0 0, clip]{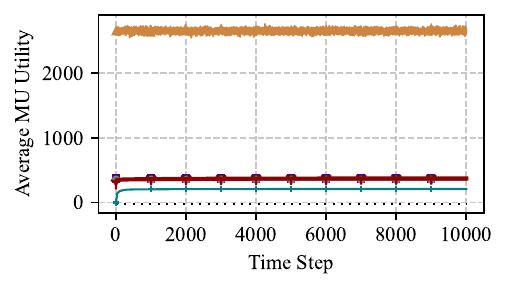}
  \label{fig:system_mu_utility}
}
\vspace{-5mm}
\par\medskip %
% ---------- Row 2: (d)(e)(f) ----------
\subfloat[\footnotesize Average completed tasks]{
  \includegraphics[width=0.32\linewidth, trim=0 8pt 0 0, clip]{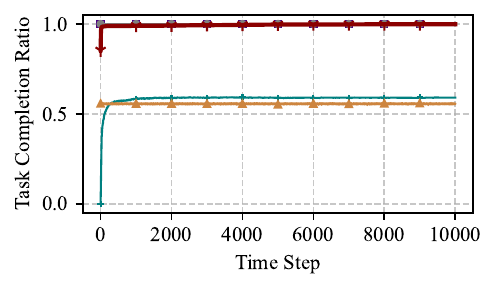}
  \label{fig:system_completed_tasks}
}
\subfloat[\footnotesize Average cumulative collisions]{
  \includegraphics[width=0.32\linewidth, trim=0 8pt 0 0, clip]{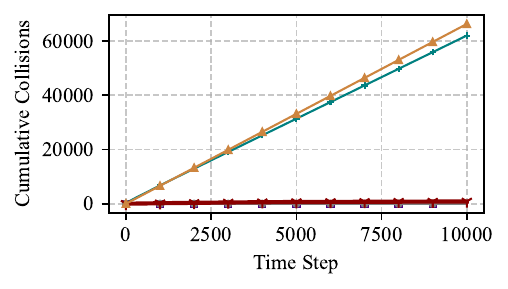}
  \label{fig:system_cumulative_collisions}
}
\subfloat[\footnotesize Average energy consumption]{
  \includegraphics[width=0.32\linewidth, trim=0 8pt 0 0, clip]{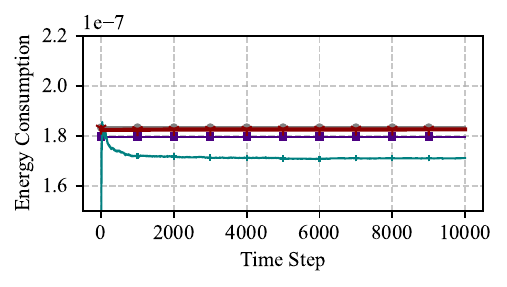}
  \label{fig:system_energy_consumption}
}
\caption{Performance comparison for $K=50, N=[10,50], Z=5$.}
\label{fig:baseline_performance}
\vspace{-5mm}
\end{figure*}

\subsection{Evaluation metrics}
\label{subsec:evaluation_metrics}
Since the \glspl{mcsp} and the \glspl{mu} have conflicting interests, we consider different metrics to evaluate the performance of our proposed PACMAB algorithm.
Specifically,  we consider metrics evaluating the complete \gls{mcs} system as well as metrics from the perspective of the \glspl{mcsp} and \glspl{mu}.

a) \textit{Social welfare}: Social welfare is often used to evaluate the collective performance of the \gls{mcs} system \cite{simon2023decentralized}.
Mathematically, the social welfare $U^{\mathrm{SW}}_t(\textbf{Y}_t)$ is given by
\begin{equation}
\label{eq:social_welfare}
    U^{\mathrm{SW}}_t(\textbf{Y}_t) = \sum_{i=1}^I \sum_{k=1}^K \sum_{n=1}^{N_i} y_{k,n,t}(U^{\mathrm{MCSP},i}_{k,n,t} + U^{\mathrm{MU}}_{k,n,t}).
\end{equation}

b) \textit{Achieved average \gls{mcsp} and \gls{mu} utility}: 
To study the effectiveness of the algorithms from the individual entity's perspective, we consider individual achieved utilities of \glspl{mcsp} and \glspl{mu}.
This is the net revenue earned by the \glspl{mcsp} and the \glspl{mu} individually.

c) \textit{Task completion ratio}: This is the ratio of the total tasks available to the total number of tasks completed in a time step.

d) \textit{Cumulative collisions}: A collision is the event when different \glspl{mcsp} send a task offer to the same \gls{mu}.
Since the \gls{mu} can only accept one offer, the rejected offers remain unfinished.
Thus, collisions degrade the performance of the \gls{mcs} system since the rejected offers affect the task completion of the respective \glspl{mcsp}.

\vspace{-3mm}
\subsection{Baseline algorithms}
\label{subsec:baseline_algorithms}
We use the following benchmark algorithms to compare the performance of our proposed algorithm.
\begin{itemize}
    \item \textit{Centralized OPT} (COPT): This is an offline optimization-based solution which requires complete information about the entire \gls{mcs} system, i.e., the qualities and efforts of the \glspl{mu} for every task type.
    The algorithm aims to find a task assignment that maximizes the social welfare given in (\ref{eq:social_welfare}).
    Note that the algorithm ignores the individual preferences of the \glspl{mcsp} and the \glspl{mu}.

    \item \textit{Perception-aware matching solution} (PRISM): This is the offline game-theory-based solution presented in Section \ref{sec:solution_algorithms}.
    The algorithm requires complete information about the individual preferences of each player and iteratively finds a solution to the dynamic hypergame.
    Using this algorithm, every player tries to maximize its own utility selfishly.

    \item \textit{Modified Gale-Shapley} (MGS) \cite{Bernd_WSaad_OSL_2024}: This is an offline game-theory-based solution which uses the well-known deferred acceptance (DA) algorithm to iteratively find stable task assignments.
    Due to the restriction in our \gls{mcs} system, one \gls{mcsp} cannot offer multiple tasks to the same \gls{mu}.
    Thus, we modify the implementation of DA in \cite{Bernd_WSaad_OSL_2024} to an \gls{mcsp} proposing scenario and enforce the one task offer per \gls{mu} constraint.
    The algorithm requires complete information about the preferences of both, the \glspl{mcsp} and the \glspl{mu}.

    \item \textit{CMAB} \cite{C1:own:Dongare2024b}: This two-sided learning approach uses a combinatorial upper confidence bound algorithm at each \gls{mcsp} without considering the perceptions of the other \glspl{mcsp}. At the \gls{mu}-side, a multi-armed bandit is implemented which learns the efforts of different task types.

    \item \textit{Random MCSP}: This is a benchmark algorithm which requires no information about the \gls{mcs} system. The \glspl{mcsp} randomly offer tasks to the \glspl{mu} with random payments.
    The \glspl{mu} strategically choose the better offers which maximize their own utility.
\end{itemize}

\subsection{Simulation setup}
\label{subsec:simulation_setup}
For the numerical evaluation, we consider $100$ independent Monte Carlo iterations.
Each iteration runs for $T=10000$ time steps.
The number of \glspl{mcsp} is set to $I=2$, i.e., \gls{mcsp} $0$ and \gls{mcsp} $1$.
The number of available \glspl{mu} is set to $K=50$, and the number of available tasks per \glspl{mcsp} varies between $10 \leq N_i \leq 50$ tasks in each time step for each \gls{mcsp}.
We consider $Z=10$ types of tasks \cite{Bernd_WSaad_OSL_2024}.
Rest of the simulation parameters are summarized in Table \ref{tab:evaluation_parameters}.
\begin{table}
  \caption{Simulation parameters}
  \label{tab:evaluation_parameters}
	 \centering{
  \scriptsize
	 \def\arraystretch{1.2}
	\begin{tabular}{|l|l|}
	\hline
	\textbf{Parameter} & \textbf{Value}  \\
	\hline
	Total number of time steps $T$ & $10000$ time steps \\
	Number of available MUs $K$ & $K=[50, 200]$\\
    Number of available task types \cite{Bernd_WSaad_OSL_2024} & $Z=[5, 25]$ tasks \\
    Number of payment levels & $|\mathcal{P}^i_z|= 20$ levels\\
    Number of tasks per type & $|\mathcal{O}^i_z|=[1,5]$ tasks \\
    Mean communication rate \cite{Bernd_WSaad_OSL_2024} & $\Bar{\tau}^\mathrm{comm}_{k,z}=[40, 80] \SI{}{\mega\bit/\second}$ \\
    CPU frequency \cite{Bernd_WSaad_OSL_2024} & $f^\mathrm{local}_k = [1, 2] \SI{}{\giga\hertz}$ \\
    Mean sensing time \cite{Mahn2021globalOrchestration} & $\Bar{\tau}^\mathrm{sense}_{k,z}=[60, 180] \SI{}{\milli\second}$ \\
    Transmission power \cite{Bernd_WSaad_OSL_2024} & $p^\mathrm{comm}_k=200 \SI{}{\milli\watt}$ \\
    Computing power \cite{Bernd_WSaad_OSL_2024} & $\SI{1}{\watt}$ \\
    Computational complexity \cite{OPAT_Huang_2022} & $c_z=[200,300]$ CPU cycles/bit \\
    Sensing data size \cite{Bernd_WSaad_OSL_2024} & $d_z = [50,100]\SI{}{\mega\bit}$ \\
    Sensing result size \cite{Bernd_WSaad_OSL_2024} & $s_z = [10, 20]\SI{}{\mega\bit}$ \\
    \hline
    \gls{mu}'s time cost parameter \cite{Bernd_WSaad_OSL_2024, C1:own:Dongare2024b} & $\alpha_k=0.01 \frac{\text{Monetary units}}{s}$ \\
    \gls{mu}'s energy cost parameter \cite{Bernd_WSaad_OSL_2024, C1:own:Dongare2024b} & $\beta_k=0.004 \frac{\text{Monetary units}}{J}$ \\
    \hline
    UCB exploration constant & $\text{UCB}_c=2$ \\
    \gls{mu} exploration constant & $\epsilon^a=1$ \\
    \gls{mu} exploration rate & $\epsilon_t = 0.999$ \\
    %Grid resolution & [200]\SI{}{\metre} \\
	\hline
	\end{tabular}}
\end{table}

\subsection{Results and discussion}
\label{subsec:results}

\begin{figure}[t]
   \centering
    \begin{minipage}[c]{0.48\linewidth}
        \includegraphics[width=\linewidth]{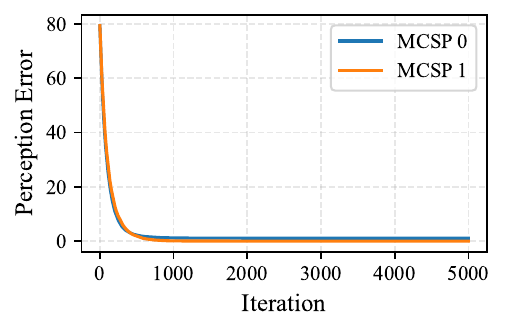}
        \vspace*{-7mm}
        \caption{PRISM: Perception error over iterations}
        \label{fig:hypergame_perception_error_evolution}
    \end{minipage}
    \hfill
    \begin{minipage}[c]{0.49\linewidth}
        \includegraphics[width=\linewidth]{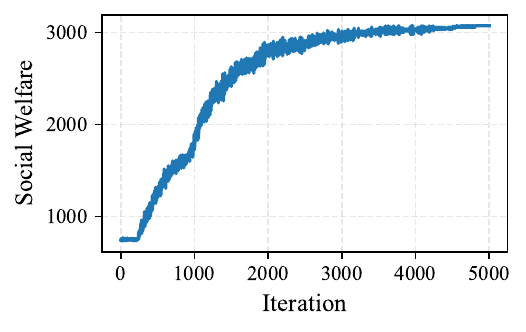}
        \vspace*{-7mm}
        \caption{Achieved social welfare of PRISM over iterations}
        \label{fig:hypergames_social_welfare_evolution}
    \end{minipage}
\end{figure}
In Fig.~\ref{fig:system_social_welfare}, we compare the social welfare achieved by different benchmarks over time.
The COPT algorithm achieves the maximum social welfare by exploiting complete system information.
PRISM converges to COPT as misperceptions diminish through repeated iterations.
Both, COPT and PRISM, find the optimal assignments, however, in COPT, the \glspl{mu} do not receive any payments, therefore, the achieved utilities of the \glspl{mcsp} and the \glspl{mu} are different, but the sum, i.e., the social welfare, converges.
The evolution of the perception error is illustrated in Fig.~\ref{fig:hypergame_perception_error_evolution}.
As the perceptions improve, PRISM is able to improve its social welfare as shown in Fig.~\ref{fig:hypergames_social_welfare_evolution}.
In Fig.\ref{fig:system_social_welfare}, our PACMAB algorithm attains $99.1\%$ of the COPT performance, demonstrating that perception-aware learning effectively maximizes social welfare.
MGS achieves about $99.4\%$ of the COPT social welfare. 
However, its strategy to completely outbid the other \gls{mcsp} leads to suboptimal but unchangeable assignments.
In contrast, CMAB achieves only $52.9\%$ due to slow learning in a large combinatorial action space and convergence to local optima.
Random MCSP performs worst, reaching $42.3\%$, as it ignores both MCSP preferences and competition.

Figure~\ref{fig:system_mcsp_utility} shows the achieved \gls{mcsp}-side utility.
COPT attains the highest \gls{mcsp} utility as it exploits the complete information without considering the individual preferences of the \glspl{mcsp} and the \glspl{mu}.
PRISM and MGS achieve $87.9\%$ and $86.8\%$ of the COPT utility, respectively.
Note that the PRISM and MGS both exploit the complete information about the MCS system and, they also consider the individual preferences of the \glspl{mcsp} and the \glspl{mu}.
MGS performs worse than PRISM because its contest-based mechanism overpays the \glspl{mu} such that they will accept the offer, which results in lower \gls{mcsp} utility.
PACMAB achieves $86.4\%$ of the COPT utility without requiring complete information by leveraging perceptions to prune contracts which lead to low achieved utility and focus on the ones which the \glspl{mcsp} as well as the \glspl{mu} prefer.
In contrast, CMAB reaches only $45.9\%$ due to the lack of perception-aware learning.
Random MCSP results in negative utility as it does not use any information.

In Fig.~\ref{fig:system_mu_utility}, we see that the achieved MU utility in case of the COPT is negative.
This is because COPT forces the MUs to perform the tasks without any payments.
Our proposed PACMAB algorithm achieves higher \gls{mu} utility and converges to PRISM and MGS solution such that the \glspl{mu} are also satisfied from the assignments.
On the contrary, the Random \gls{mcsp} overpays the \glspl{mu} and thus achieves high MU utility.
However this is unrealistic since the achieved \gls{mu} utility is at the cost of \gls{mcsp} utility.
The CMAB achieves \gls{mu} utility of only $55.91\%$ as compared to our proposed PACMAB because CMAB performs less tasks and therefore, in CMAB, the \glspl{mu} earn less on average.

In Fig.~\ref{fig:system_completed_tasks}, we analyze task completion ratios of the benchmark solutions in comparison with our proposed PACMAB.
COPT and MGS algorithms exploit complete information and are able to complete all the available tasks.
When the misperceptions vanish, the PRISM algorithm also achieves a task completion ratio of $1$.
Our proposed PACMAB achieves $99.8\%$ task completion without the requirement of complete information about the MCS system.
This means, the PACMAB algorithm not only prioritizes high welfare, but also aims to maximize the task completion.
CMAB completes only $59\%$ of tasks since it fails to learn about the dynamic competition between the \glspl{mcsp} which results in poor performance.
The Random MCSP algorithm performs the worst by completing only $55.5\%$ of tasks since it does not utilize any information about the \gls{mcs} system.
The collision ratio, i.e., the ratio of rejected task offers over total offered tasks, directly affects the task completion ratio.

In Fig.~\ref{fig:system_cumulative_collisions}, we compare the average cumulative collisions over time.
These are task offer rejections in the scenario which degrade the task completion performance and consequently the achieved utilities of the \glspl{mu} and the \glspl{mcsp} along with the achieved social welfare.
The COPT and the MGS do not have any collisions since they exploit the complete information about the scenario.
PRISM minimizes the collisions by offering better task offer proposals over time and converges to COPT and MGS.
PACMAB learns \gls{mcsp}'s own preferences as well as reduces collisions over time.
Therefore, it exhibits sublinear cumulative collisions.
For the CMAB and the Random MCSP, the perceptions about the preferences of other MCSPs are not considered which results in frequent rejections of task offers.

In Fig.~\ref{fig:system_energy_consumption}, we compare the energy consumption of the benchmark algorithms in comparison with the proposed PACMAB algorithm.
We exclude the Random MCSP algorithm from the comparison for the clarity of the presentation.
The Random MCSP consumes high energy with a larger variance which obscures the performances of other schemes.
The CMAB algorithm consumes lower energy however, the algorithm also performs less tasks on average.
This result illustrates that our PACMAB algorithm achieves energy consumption which is comparable to that of the COPT and the PRISM without the requirement of the complete information.
The result highlights that the PACMAB algorithm achieves a superior performance while being energy efficient.

\begin{figure}[t]
    \centering
    \begin{minipage}[c]{\linewidth}
    \includegraphics[width=\linewidth]{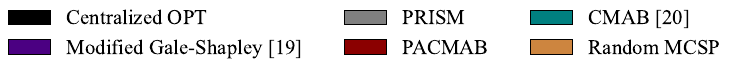}
    %\vspace{-13mm}
    \end{minipage}
    \hfill
    \begin{minipage}[c]{0.49\linewidth}
        \includegraphics[width=\linewidth]{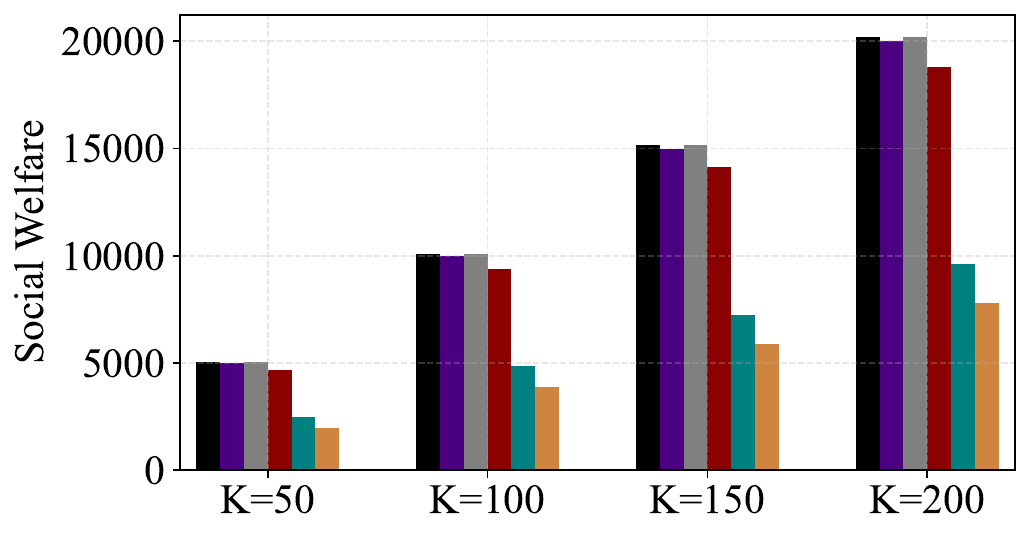}
        \caption{Achieved social welfare vs. number of MUs}
        \label{fig:mu_increasing_social_welfare}
    \end{minipage}
    \hfill
    \begin{minipage}[c]{0.49\linewidth}
        \includegraphics[width=\linewidth]{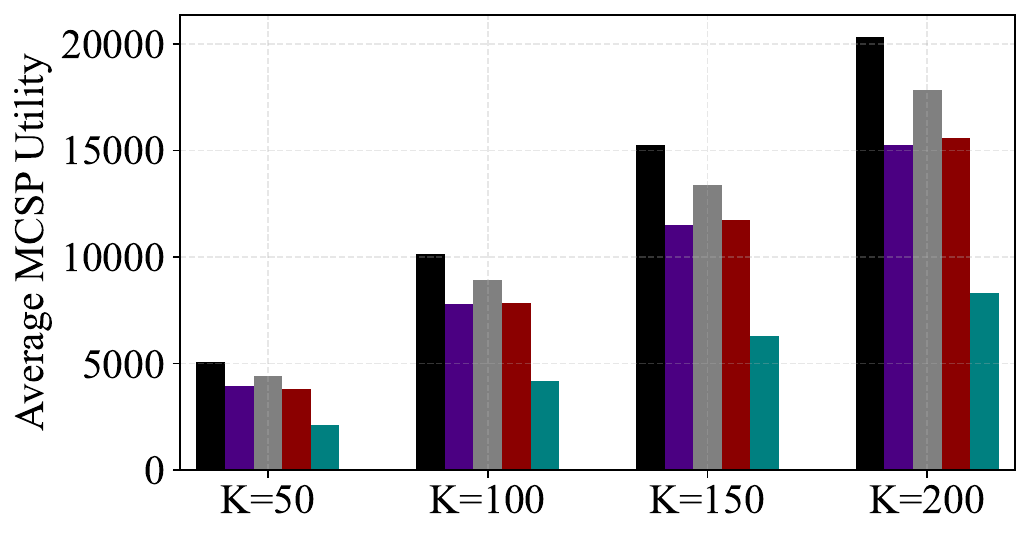}
        \caption{Achieved MCSP utility vs. number of MUs}
        \label{fig:mu_increasing_mcsp_utility}
    \end{minipage}
    \hfill
    \begin{minipage}[c]{0.49\linewidth}
        \includegraphics[width=\linewidth]{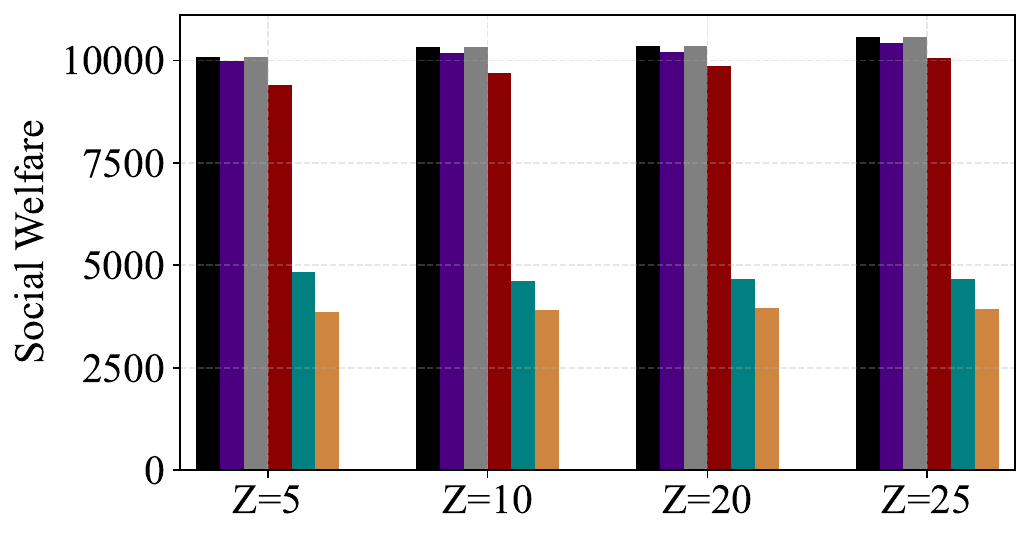}
        \caption{Achieved social welfare vs. number of task types}
        \label{fig:task_type_increasing_social_welfare}
    \end{minipage}
    \hfill
    \begin{minipage}{0.49\linewidth}
        \includegraphics[width=\linewidth]{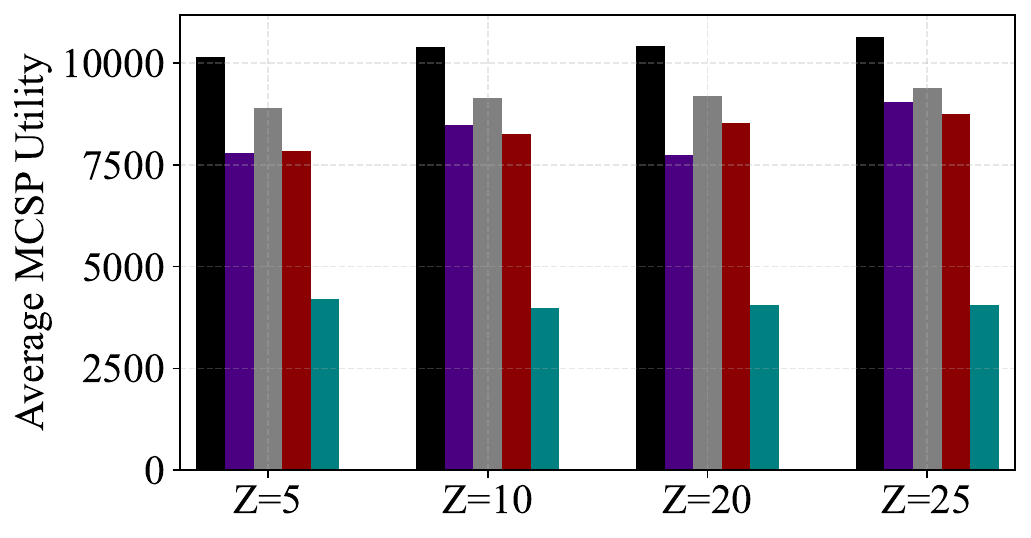}
        \caption{Achieved MCSP utility vs. number of task types}
        \label{fig:task_type_increasing_mcsp_utility}
    \end{minipage}
\end{figure}
To analyze the effect of increasing number of \glspl{mu} on the social welfare and the achieved MCSP utility, we consider $K=\{50,100,150,200\}$ \glspl{mu} and set $N = K$.
The results are illustrated in Fig.~\ref{fig:mu_increasing_social_welfare} and Fig.~\ref{fig:mu_increasing_mcsp_utility}, respectively.
As the number of \glspl{mu} increase, the achieved social welfare also increases.
For the case of $K=200$ \glspl{mu}, PRISM converges to the COPT algorithm while the MGS algorithm achieves $98.9\%$ social welfare.
In larger scenarios, the possible number of matching assignments grow exponentially with the number of MUs.
Consequently, it is difficult to learn which matching combinations are better suitable and which are not.
Still, PACMAB achieves at least $93.0\%$ social welfare as compared to the COPT, indicating that PACMAB is well-suited for larger networks with more \glspl{mu} and tasks.

To analyze the effect of heterogeneity of tasks on the social welfare and the achieved MCSP utility, we consider the following scenario.
The number of task types are varied between $Z=\{5, 10, 20, 25\}$.
We consider $K=N=100$ for this case.
The result of this analysis is shown in Fig.~\ref{fig:task_type_increasing_social_welfare} and in Fig.~\ref{fig:task_type_increasing_mcsp_utility}, respectively.
PACMAB achieves at least $95.0\%$ social welfare as compared to the COPT and PRISM algorithm.
As the number of tasks increase, the possible actions that each \gls{mcsp} can take also increase.
For example, for $K=100$ \glspl{mu}, $N=100$ tasks of $Z=25$ task types, and $20$ payment levels, the possible number of actions for each \gls{mcsp} are approximately $2^{297}$.
Even for such a high number, PACMAB learns efficient task proposal and task acceptance strategies.
In comparison, CMAB performs worse with at the most $46.3\%$ of social welfare and $46.8\%$ achieved \gls{mcsp} utility as compared to our PACMAB algorithm.
This demonstrates that the PACMAB algorithm is well-suited for MCS systems with high number of heterogeneous tasks, too.
\section{Conclusion}
\label{sec:conclusion}
In this paper, we have investigated competitive multi-platform mobile crowdsensing under incomplete information by modeling task offers and acceptances as a two-sided matching market with contracts.
To address uncertainty about competitors’ preferences, we have introduced a level-one dynamic hypergame formulation in which MCSPs update perceptions through repeated interactions and derived a perception-aware benchmark solution under partial-information assumptions.
To operate under fully unknown MU qualities and task execution efforts, we have proposed PACMAB, a fully decentralized perception-aware two-sided bandit-learning framework that learns task-proposal and task-acceptance strategies online.
PACMAB has linear computational complexity in terms of number of MUs, available tasks, and the discrete payment levels at the MCSP.
At the MU side, PACMAB exhibits linear complexity in terms of number of task offers received.
Simulation results demonstrate that PACMAB achieves at least $93\%$ of the optimal social welfare and over $99\%$ task completion, even as the number of MUs and task types scale significantly, and without assuming complete system information.
These findings confirm that perception-aware learning is a promising paradigm for decentralized MCS systems, effectively bridging the gap between fully informed centralized solutions and practical deployments under incomplete information.

\begin{spacing}{0.87}
\bibliographystyle{./bibliography/IEEEtran}
\bibliography{./bibliography/IEEEabrv,./bibliography/sample}
\end{spacing}
\appendices
\section{Proof of Theorem 1}
\label{proof_theorem1}
For property 1), consider the perception update rule from Algorithm 1 (Line 31) :
$\theta^i_{j,k,z}(t+1) = \max\{\theta^i_{j,k,z}(t), P^j_{k,n,t}\}$. We further define the perception gap for each $(j,k,z)$,
$$\delta^i_{j,k,z}(t) = \theta^j_{j,k,z} - \theta^i_{j,k,z}(t). \geq 0$$
We can make three key observations here. First, perceptions are monotonically non-decreasing,
$\theta^i_{j,k,z}(t+1) \geq \theta^i_{j,k,z}(t) \quad \forall t$. By Assumption~\ref{assump:rational}, MCSP $j$ will not pay more than its valuation (except during exploration). Therefore:
$P^j_{k,n,t} \leq \theta^j_{j,k,z} + \epsilon_t \cdot M$. When MCSP $i$ observes MU $k$ accepting an offer from MCSP $j$ at payment $P^j_{k,n,t}$, we have two cases. If $P^j_{k,n,t} > \theta^i_{j,k,z}(t)$ (a misprediction/surprise): \begin{align*} \delta^i_{j,k,z}(t+1) &= \theta^j_{j,k,z} - \theta^i_{j,k,z}(t+1)\\ &= \theta^j_{j,k,z} - \max\{\theta^i_{j,k,z}(t), P^j_{k,n,t}\}\\ &= \theta^j_{j,k,z} - P^j_{k,n,t}\\ &\leq \theta^j_{j,k,z} - \theta^i_{j,k,z}(t) = \delta^i_{j,k,z}(t).\end{align*}
The gap decreases and hence perception improves over time. Otherwise, if $P^j_{k,n,t} \leq \theta^i_{j,k,z}(t)$, i.e., no surprise, then
$$\theta^i_{j,k,z}(t+1) = \theta^i_{j,k,z}(t) \implies \delta^i_{j,k,z}(t+1) = \delta^i_{j,k,z}(t).$$
In both cases, $\delta^i_{j,k,z}(t+1) \leq \delta^i_{j,k,z}(t)$. Therefore:
$$\Delta\theta^i(t+1) = \sum_{j,k,z} \mathbb{E}[\delta^i_{j,k,z}(t+1)] \leq \sum_{j,k,z} \mathbb{E}[\delta^i_{j,k,z}(t)] = \Delta\theta^i(t).$$

For the property 2), we apply stochastic approximation theory by defining the Lyapunov function:
$$V^i(t) = \sum_{j \neq i} \sum_{k,z} (\theta^j_{j,k,z} - \theta^i_{j,k,z}(t))^2.$$
When an update occurs at time $t$ for triple $(j,k,z)$ (i.e., when MCSP $j$ competes for MU $k$ on task type $z$), we have:
\begin{align*}
\mathbb{E}[V^i(t+1) | \mathcal{F}_t] &= V^i(t) - (\theta^j_{j,k,z} - \theta^i_{j,k,z}(t))^2\\
&\quad + \mathbb{E}[(\theta^j_{j,k,z} - \max\{\theta^i_{j,k,z}(t), {P}^j_{k,z}(t)\})^2 | \mathcal{F}_t],
\end{align*}
where the expectation is taken over the random payment $P^j_{k,n,t}$ that MCSP $j$ will offer at time $t$ (not yet observed) and $\mathcal{F}_t$ denotes the filtration (history/information set) containing all observations available to MCSP $i$ up to and including time $t$, including past perceptions $\{\theta^i_{j,k,z}(s)\}_{s \leq t}$, actions, and feedback from MUs. Under rational play (Assumption~\ref{assump:rational}) with exploration,
$$\mathbb{E}[{P}^j_{k,z}(t)] = (1-\epsilon_t) \cdot P^{j,*}_{k,z} + \epsilon_t \cdot \bar{P}^j_{k,z},$$
where $P^{j,*}_{k,z}$ is the optimal payment and $\bar{P}^j_{k,z}$ is the exploration distribution mean.
The optimal payments satisfy $P^{j,*}_{k,z} \leq \theta^j_{j,k,z}$ by rationality, as $\epsilon_t \to 0$,
$\mathbb{E}[{P}^j_{k,z}(t)] \to \theta^j_{j,k,z}$.
This shows that the updates are in the direction of the true values. By the Robbins-Monro theorem \cite{robbins1951stochastic}, since Assumption~\ref{assump:exploration} ensures $\sum_t \epsilon_t = \infty$ and $\sum_t \epsilon_t^2 < \infty$, we have,
$$\theta^i_{j,k,z}(t) \to \theta^j_{j,k,z} - \delta^i_{j,k,z,\infty},
$$
where $\delta^i_{j,k,z,\infty} \geq 0$ is the residual error which is potentially zero. Therefore,
$$\Delta\theta^i(t) \to \Delta\theta^i_\infty = \sum_{j,k,z} \delta^i_{j,k,z,\infty}.
$$
%\vspace{-2mm}
For the property 3), we consider the exploitation phase when $\epsilon_t \approx 0$. The perception update can be approximated as a linear dynamical system. For each triple $(j,k,z)$, let $\gamma_{j,k,z}$ denote the probability that MCSP $j$ competes for $(k,z)$ and MCSP $i$ observes this event in a given time step. Then:
$$\mathbb{E}[\theta^i_{j,k,z}(t+1) - \theta^j_{j,k,z}] = (1 - \gamma_{j,k,z}) \cdot (\theta^i_{j,k,z}(t) - \theta^j_{j,k,z}) + \nu_{j,k,z}(t)$$
where $\nu_{j,k,z}(t)$ is exploration noise. Vectorizing, let $\vec{\delta}^i(t) \in \mathbb{R}^{KZ}$ contain all perception errors:
$$\vec{\delta}^i(t+1) = (I - \Gamma) \vec{\delta}^i(t) + \vec{\nu}(t)$$
where $\Gamma = \text{diag}(\gamma_{j,k,z})$ is a diagonal matrix with entries in $(0,1)$. The eigenvalues of $(I - \Gamma)$ lie in $[0, 1)$. Let $\lambda_{\max} = \max\{\text{eigenvalues of } (I-\Gamma)\} < 1$. Then:
\begin{equation}\mathbb{E}[\|\vec{\delta}^i(t)\|] \leq \lambda_{\max}^t \|\vec{\delta}^i(0)\| + \sum_{s=0}^{t-1} \lambda_{\max}^{t-s} \mathbb{E}[\|\vec{\nu}(s)\|]. \label{eq_dynamic_system}
\end{equation}
\eqref{eq_dynamic_system} follows from \cite{horn2012matrix} by solving the linear difference equation $\vec{\delta}^{\,i}(t+1) = (I-\Gamma)\vec{\delta}^{\,i}(t) + \vec{\nu}(t)$ by recursive substitution, which gives $\vec{\delta}^{\,i}(t) = (I-\Gamma)^t \vec{\delta}^{\,i}(0) + \sum_{s=0}^{t-1} (I-\Gamma)^{t-s-1} \vec{\nu}(s)$. Taking norms and using the bound $\|(I-\Gamma)^t\| \le \lambda_{\max}^t$, where $\lambda_{\max}$ is the spectral radius of $I-\Gamma$, yields the desired inequality.
Under Assumption~\ref{assump:exploration}, $\mathbb{E}[\|\vec{\nu}(s)\|] \leq C \epsilon_s$ for some constant $C$. Since $\sum_{s=0}^{\infty} \epsilon_s^2 < \infty$:
$$\sum_{s=0}^{\infty} \lambda_{\max}^{-s} \mathbb{E}[\|\vec{\nu}(s)\|] \leq C \sum_{s=0}^{\infty} \epsilon_s \lambda_{\max}^{-s} < \infty$$
Define $\lambda = -\log(\lambda_{\max}) > 0$. Then $\lambda_{\max} = e^{-\lambda}$, and:
$$\mathbb{E}[\Delta\theta^i(t)] \leq C_1 \|\vec{\delta}^i(0)\| \cdot e^{-\lambda t} + \frac{\epsilon_{\text{noise}}}{\lambda}$$
where $\epsilon_{\text{noise}} = C_2 \sup_t \epsilon_t$ for appropriate constants $C_1, C_2$.\\
For property 4), from the $\epsilon$-robust HNE analysis, the utility function satisfies a Lipschitz condition with respect to perception errors. Specifically, for any assignment $Y^i$ based on perception $\theta^i_{-i}$ and the optimal assignment $Y^{i*}$ based on true values $\theta^{-i}_{-i}$,
$$|U^{\text{MCSP},i}(Y^i; \theta^{-i}_{-i}) - U^{\text{MCSP},i}(Y^{i*}; \theta^{-i}_{-i})| \leq L \cdot \|\theta^i_{-i} - \theta^{-i}_{-i}\|_\infty,$$ where $L = K \cdot \max_z \max_{P \in \mathcal{P}^i_z} P$ is the maximum total payment difference across all MUs.
As $t \to \infty$,
$\|\theta^i_{-i}(t) - \theta^{-i}_{-i}\|_\infty \to \Delta\theta^i_\infty$. At SHNE, each MCSP plays optimally with respect to its (converged) perceptions. The utility achieved is:
$U^{\text{MCSP},i}_{\text{SHNE}} = U^{\text{MCSP},i}_{\text{opt}} - L \cdot \Delta\theta^i_\infty,$ where $U^{\text{MCSP},i}_{\text{opt}}$ is the utility at Nash equilibrium with perfect information. 
Therefore:
$$\liminf_{t \to \infty} \mathbb{E}[U^{\text{MCSP},i}_t] \geq U^{\text{MCSP},i}_{\text{SHNE}},$$
which can also be written as:
$$\liminf_{t \to \infty} \mathbb{E}[U^{\text{MCSP},i}_t] \geq U^{\text{MCSP},i}_{\text{opt}} - L \cdot \Delta\theta^i_\infty = U^{\text{MCSP},i}_{\text{SHNE}}$$

\end{document}